\newtheorem{theorem}{Theorem} [section]
\newtheorem{lemma}[theorem]{Lemma}
\newtheorem{definition}[theorem]{Definition}
\newtheorem{corollary}[theorem]{Corollary}
\newtheorem{proposition}[theorem]{Proposition}
\newtheorem{observation}[theorem]{Observation}
\renewcommand\footnotesize{%
   \@setfontsize\footnotesize\@ixpt{8}%
   \abovedisplayskip 8\p@ \@plus2\p@ \@minus4\p@
   \abovedisplayshortskip \z@ \@plus\p@
   \belowdisplayshortskip 4\p@ \@plus2\p@ \@minus2\p@
   \def\@listi{\leftmargin\leftmargini
               \topsep 4\p@ \@plus2\p@ \@minus2\p@
               \parsep 2\p@ \@plus\p@ \@minus\p@
               \itemsep \parsep}%
   \belowdisplayskip \abovedisplayskip
}
\newcommand{\bE}{\ensuremath{\mathbb{E}}}
\newcommand{\ignore}[1]{}
\newcommand\oa{\overline{\mathcal A}}
\newcommand\obr{\overline{\mathbf R}}
\newcommand{\eps}{\epsilon}
\DeclareMathOperator{\poly}{poly}
\DeclareMathOperator{\sink}{sink}
\DeclareMathOperator{\depth}{depth}
\begin{document}

\thispagestyle{empty}

\title{A new notion of commutativity  for the algorithmic \\ Lov\'{a}sz Local Lemma}

\author{
David G. Harris \\
University of Maryland, College Park\\
{\small davidgharris29@gmail.com}
\and
Fotis Iliopoulos\thanks{This material is based upon work directly supported by the IAS Fund for Math and indirectly supported by the National Science Foundation Grant No. CCF-1900460. Any opinions, findings and conclusions or recommendations expressed in this material are those of the author(s) and do not necessarily reflect the views of the National Science Foundation. This work is also supported by the National Science Foundation Grant No. CCF-1815328.} \\ 
Institute for Advanced Study \\
{\small fotios@ias.edu}
\and
Vladimir Kolmogorov\thanks{Supported by the European Research Council under
the European Unions Seventh Framework Programme (FP7/2007-2013)/ERC
grant agreement no 616160} \\
Institute of Science and Technology Austria \\ {\small vnk@ist.ac.at}
}

\maketitle 

\begin{abstract}
The Lov\'{a}sz Local Lemma (LLL) is a powerful tool in probabilistic combinatorics which can be used to establish the \emph{existence} of objects that satisfy certain properties. The breakthrough paper of Moser \& Tardos and follow-up works revealed that the LLL has intimate connections with a class of stochastic local search algorithms for finding such desirable objects. In particular, it can be seen as a sufficient condition for this type  of algorithm to converge fast.  

Besides conditions for convergence, many other natural questions can be asked about algorithms; for instance, ``are they parallelizable?", ``how many solutions can they output?", ``what is the expected `weight' of a solution?". These questions and more have been answered for a class of LLL-inspired algorithms called commutative. In this paper we introduce a new, very natural and more general  notion  of commutativity (essentially matrix commutativity) which allows us to show a number of new refined properties of LLL-inspired local search algorithms with significantly simpler proofs.
\end{abstract}

%\newpage

\section{Introduction}
The Lov\'{a}sz Local Lemma  (LLL) is a powerful tool in probabilistic combinatorics \cite{LLL}. At a high level, it states that given a collection of bad events in a probability space $\mu$, where each bad-event is  not too likely and is independent of most other bad events, there is a strictly positive probability of avoiding all of them. In particular, a configuration avoiding all such bad-events exists. 

In its simplest, ``symmetric'' form, the LLL requires that each bad-event has probability at most $p$ and is dependent with at most $d$ others, where $ e p d \leq 1$.  

 For example, consider a CNF formula where each clause has $k$ literals and shares variables with at most $L$ other clauses. For each clause $c$ we can define the bad event $B_c$ that $c$ is violated in a chosen assignment of the variables. For a uniformly random variable assignment, each bad-event has probability $p = 2^{-k}$ and affects at most $d \leq L$ others. So when $L \leq \frac{2^k}{e}$, the formula is satisfiable; crucially, this criterion does not depend on the total number of variables.

A generalization known as the Lopsided LLL (LLLL) allows bad-events to be \emph{positively} correlated (in a certain technical sense) instead of independent. For example, consider an $n \times n$ array of colors, where each color appears at most $\Delta$ times in total. We wish to find a \emph{latin transversal} of $C$, that is, a permutation $\pi$ over $\{1, \dots, n \}$ such that all colors $C(i, \pi i): i = 1, \dots, n$ are distinct. To apply the LLLL, we take our probability space to be the uniform distribution on permutations $\pi$. For each pair of cells $(x_1, y_1), (x_2, y_2)$ of the same color, there is a corresponding bad-event $\pi x_1 = y_1 \wedge \pi x_2 = y_2$, which has probability $p = \frac{1}{n(n-1)}$. Erd\H{o}s \& Spencer \cite{LopsTrav} showed that two bad-events here are negatively-correlated only if they overlap on a row or column. So each bad-event is dependent with at most $d = 4 (\Delta-1) n$ others.  Thus, for $\Delta \leq \frac{n}{4 e}$, the LLLL criterion holds and a transversal exists. A stronger form of the LLLL (the cluster-expansion criterion \cite{bissacot}) tightens this to $\Delta \leq \frac{27n}{256}$, which is the strongest bound currently known.

Although the LLLL applies to general probability spaces, most constructions in combinatorics use a simpler setting we refer to as the \emph{variable LLL}. Here, the probability space $\mu$ is a cartesian product with $n$ independent variables $X = (X_1, \dots, X_n)$, and each bad-event is determined by a subset of the variables. Two bad-events are dependent if they share a common variable. This covers, for instance, the CNF formula example described above. In a seminal work,  Moser \& Tardos~\cite{MT} presented a simple local search algorithm to make the variable-version LLL constructive. This algorithm can be described as follows:
\begin{algorithm}[H]
\centering
\label{gen-alg}
\begin{algorithmic}[1]
  \State Draw the state $X$ from distribution $\mu$
\While{some bad-event is true on $X$}
\State Select, arbitrarily, a bad-event $B$ true on $X$
\State Resample, according to distribution $\mu$, all variables $X_i$ affecting $B$
\EndWhile
\end{algorithmic}
\caption{The Moser-Tardos (MT) resampling algorithm}
\label{mt-alg}
\end{algorithm}

Moser \& Tardos showed that if the symmetric LLL criterion (or more general asymmetric LLLL criterion) is satisfied, then this algorithm quickly converges. Following this work, a large effort has been devoted to making different variants  of the LLLL constructive. This research has taken many directions, including further analysis of \Cref{mt-alg} and its connection to different LLL criteria~\cite{determ,szege_meet,PegdenIndepen}.

One line of research has been to use variants of the MT algorithm for general probability spaces beyond the variable LLL. We can summarize this in the following framework. There is a discrete state space $\Omega$, with a collection $\mathcal{F}$ of subsets  of $\Omega$ which we call {\it flaws}. There is also a randomized procedure called the \emph{resampling oracle} $\mathbf R_f$ for each flaw $f$; it takes some random action to attempt to ``fix'' that flaw, resulting in a new state $\sigma' \leftarrow \mathbf R_f(\sigma)$. (It is possible that this new state $\sigma'$ does not actually fix $f$.) With these ingredients, we define the general local Search Algorithm as follows:

\begin{algorithm}[H]
\centering
\begin{algorithmic}[1]
  \State Draw the state $\sigma$ from some distribution $\mu$
\While{some flaw holds on $\sigma$}
\State Select a flaw $f$ of $\sigma$, according to some rule $\mathbf S$.
\State Update $\sigma \leftarrow  \mathbf R_f(\sigma)$.
\EndWhile
\end{algorithmic}
\caption{The Search Algorithm}
\label{ls-alg}
\end{algorithm}

Again consider the latin transversal construction, which was one of the main original motivations behind the  Search Algorithm \cite{PermHarris}. For this problem, each pair of cells with the same color now corresponds to a flaw. The resampling oracle for a flaw applies a random ``swapping'' operation  to the permutation. This algorithm generates a latin transversal under the same conditions as the existential LLLL argument.  Further application of the Search Algorithm include matchings and spanning trees of the clique  \cite{AIJACM, AIK,PermHarris,IS,HV} as well as settings not directly connected to the LLL \cite{AIS,PartResmp2,PartResmp1}.
 
The most important question about the behavior of the Search Algorithm is whether it converges to a flawless object. But, there are other important questions to ask; for instance, ``is it parallelizable?", ``how many solutions can it output?", ``what is the expected  `weight' of a solution?". These questions and more have been answered for the MT algorithm in a long series of papers~\cite{determ,distributed,ParallelHarrisHaeupler,Haeupler_jacm,ParallelHarris,EnuHarris,szege_meet,MT}. 
For example,  results of \cite{Haeupler_jacm,NewBoundsHarris,EnuHarris} discuss how to estimate the entropy of the output distribution and how to deal algorithmically with super-polynomially many bad events.

We emphasize the role of the selection rule $\mathbf S$ in the Search Algorithm. This procedure must choose which flaw $f \ni \sigma$ to resample, if there are multiple possibilities; it may depend on the prior states and may be randomized. The original MT algorithm allows nearly complete freedom for this. However, for general resampling oracles, convergence guarantees are only known for a few relatively rigid rules such as selecting the flaw of least index~\cite{HV}.  In \cite{Kolmofocs}, Kolmogorov identified a property called \emph{commutativity} that allows a free choice for $\mathbf S$ in the Search Algorithm (as in the MT algorithm). This seemingly minor detail turns out to play a key role in extending the additional properties of the MT algorithm to the Search Algorithm. For instance, it leads to parallel algorithms \cite{Kolmofocs} and to bounds on the output distribution \cite{LLLWTL}.   

At a high level, the goal of this paper is to provide a more conceptual, algebraic explanation for the commutativity properties of resampling oracles and their role in the Search Algorithm. We do this by introducing a notion of commutativity, essentially matrix commutativity, that is both more general and simpler than the definition in \cite{Kolmofocs}.  Most of our results had already been shown, in slightly weaker forms, in prior works \cite{Kolmofocs,LLLWTL, NewBoundsHarris}.  However, the proofs were computationally heavy and narrowly targeted to certain probability spaces,  with numerous technical side conditions and restrictions. 
  
Before the formal definitions,  let us provide some intuition. For each flaw $f$, consider an $| \Omega | \times |\Omega |$ transition matrix $A_f$. Each row of $A_f$   describes the probability distribution of resampling $f$ at a given state $\sigma$.  We call the resampling oracle \emph{commutative} if the transition matrices commute for any pair of flaws which are ``independent'' in the LLL sense. We show a number of results for such  commutative oracles:
\begin{itemize}

\item We obtain bounds on the distribution of the state at the termination of the Search Algorithm. These bounds are comparable to the \emph{LLL-distribution}, i.e., the distribution induced by conditioning on avoiding all bad events. Similar results, albeit with a number of additional technical restrictions, had been shown in \cite{LLLWTL} for the original definition of commutativity. 

\item We develop a generic parallel version of the Search Algorithm. This extends results of \cite{Kolmofocs, ParallelHarris}, with simpler and more general proofs.

\item In many settings, flaws are formed from smaller ``atomic'' events \cite{ParallelHarris}. We show that, if the atomic events satisfy the generalized commutativity definition, then so do the larger ``composed'' events. This natural property did not seem to hold for the original commutativity definition of \cite{Kolmofocs}.

\item For some probability spaces, specialized distributional  bounds  are available,  beyond the ``generic'' LLL bounds \cite{NewBoundsHarris}.  Our construction captures many of these results in a stronger and more unified way.
\end{itemize}

As a concrete example of our results, we show that, for the latin transversal application, the resulting permutation $\pi$ has nice distributional properties. In particular, we show the following:

\begin{theorem}
\label{perm-thm44}
If each color appears at most $\Delta \leq \frac{27}{256} n$ times in the array, then the Search Algorithm generates a latin transversal $\pi$ such that, for every pair $(x,y)$, it holds that $$
0.53/n \leq \Pr( \pi x = y ) \leq 1.59/n
$$
\end{theorem}

The upper bound improves quantitatively over a similar bound of \cite{NewBoundsHarris}; to the best of our knowledge, no non-trivial lower bound was previously known. Intriguingly, these bounds are not known  for the LLL-distribution itself.  To better situate \Cref{perm-thm44}, note that Alon, Spencer, \& Tetali \cite{alon1995covering} showed that there is a (minuscule) universal constant $\beta > 0$ such that, if each color appears at most $\Delta = \beta n$ times in the array and $n$ is a power of two, then the array can be \emph{partitioned} into $n$ independent transversals. In this case, randomly selecting  a transversal from this list would give  $\Pr( \pi x = y) = 1/n$. \Cref{perm-thm44} can be regarded as a simplified \emph{fractional} analogue, i.e. we fractionally decompose the given array into  transversals, with $\Pr( \pi x = y) = \Theta(1/n)$ for all pairs $x,y$. Furthermore, we achieve this guarantee automatically, merely by running the Search Algorithm.

\subsection{Overview of our approach}

Although it will require significant definitions and technical development to state our results formally, let us provide a high-level summary here. As a starting point, consider the MT algorithm. Moser \& Tardos \cite{MT} used a construction called a \emph{witness tree} for the analysis: for each resampling of a bad-event $B$, they generate a corresponding witness tree which records an ``explanation'' of why $B$ was true at that time. More properly, it provides a history of all the prior resamplings which affected the variables involved in $B$.  

The main technical lemma governing the behavior of the MT algorithm is the ``Witness Tree Lemma,'' which states that the probability of producing a given witness tree is at most the product of the probabilities of the corresponding events. The bounds on algorithm runtime, as well as parallel algorithms and distributional properties, then follow from a union bound over witness trees. 

Versions of this Witness Tree Lemma have been shown for some variants of the MT algorithm \cite{LLLLBeyond, PartResmp1}. Iliopoulos \cite{LLLWTL} further showed that it held for general spaces which satisfy the commutativity property; this, in turn, leads to the nice algorithmic properties such as parallel algorithms.

Our main technical innovation is to generalize the Witness Tree Lemma. Instead of tracking a \emph{scalar} product of probabilities in a witness tree, we instead consider a \emph{matrix product.}  We bound  the probability of a given witness tree (or, more properly, a slight generalization known as a witness directed acyclic graph) in terms of the products of the transition matrices of the corresponding flaws.  At the end, we obtain the scalar form of the Witness Tree Lemma by projecting to a one-dimensional eigenspace; for this, we take advantage of some spectral estimation methods of \cite{AIS}.

\subsection{Outline of the paper} 

In \Cref{Background}, we introduce our new matrix-based definition for commutativity.

In \Cref{WitnessTree-sec}, we define the witness directed acyclic graph following \cite{ParallelHarrisHaeupler}.  We show bounds on the Search Algorithm in terms of certain associated matrix products. We also discuss how these relate to standard ``scalar'' LLL criteria.

In \Cref{distrib-sec}, we derive simple bounds on the state distribution  of the Search Algorithm.

In \Cref{sec:parallel}, we describe a parallel implementation of the Search Algorithm.

In \Cref{comp-sec}, we consider a construction for building resampling oracles out of smaller ``atomic'' events. 

In \Cref{sec:detailedist}, we consider more involved distributional bounds, including applications to latin transversals and clique perfect matchings.

\subsection{Definitions}
Throughout the paper we consider implementations of the Search Algorithm. We list the resampled flaws in order as $f_0, f_1, f_2, \dots$; to avoid ambiguity, we refer to the \emph{state at time $t$} as the state $\sigma$ just \emph{before} resampling flaw $f_t$. Note that the state at time 0 is drawn directly from $\mu$.

For flaw $f$ and states $\sigma \in f, \sigma' \in \Omega$ we write $\sigma \xrightarrow{f} \sigma'$ to denote that the algorithm  resamples $f$ at $\sigma$ and moves to $\sigma'$.  We define $A_f[\sigma, \sigma']$ to be the probability of such transition under resampling oracle $\mathbf R_f$, i.e.
$$
A_f[\sigma, \sigma'] = \Pr( \mathbf R_f( \sigma) = \sigma') 
$$

To allow us to write our state transitions from left-to-right, we also write $e_{\sigma}^{\top} A_f e_{\sigma'} = A_f[\sigma, \sigma']$.  For $\sigma \notin f$, we define $A_f[\sigma, \sigma'] = 0$.   Note that any vector $\theta$ over $\Omega$ satisfies $|| \theta^{\top} A_f ||_1 = \sum_{\sigma \in f} \theta[\sigma] \leq || \theta^{\top} ||_1$. Thus, the matrix $A_f$ is substochastic.

For an arbitrary event $E \subseteq \Omega$, we define $e_E$ to be the indicator vector for $E$, i.e. $e_E[\sigma] = 1$ if $\sigma \in E$ and $e_E[\sigma] = 0$ otherwise. For a state $\sigma \in \Omega$, we write $e_{\sigma}$ as shorthand for $e_{ \{ \sigma \}}$, i.e. the basis vector which has a $1$ in position $\sigma$ and zero elsewhere. With this notation,  $e_{\sigma}^{\top} A_f$ for $\sigma \in f$ is the vector representing the probability distribution obtained by resampling flaw $f$ at state $\sigma$. When $\sigma \notin f$, then $e_{\sigma}^{\top} A_f = \vec 0$.

For vectors $u, v$ we write $u \preceq v$ if $u[i] \leq v[i]$ for all entries $i$. We write $u \propto v$ if there is some scalar value $c$ with $u = c v$. Likewise, for matrices $A, B$ we write $A \propto B$ if $A = c B$ for some scalar value $c$.

\section{Commutativity}
\label{Background} 
 We suppose that we have fixed a symmetric relation $\sim$ on $\mathcal F$, with $f \not \sim f$ for all $f$. We refer to $\sim$ as the \emph{dependence relation}. We define $ \Gamma(f)$ to be the set of flaws $g$ with $f \sim g$, and we also define $\overline \Gamma(f) = \Gamma(f) \cup  \{f \}$. We write $f \simeq g$ if $f \sim g$ or $f = g$.

The major contribution of this paper is to introduce a natural definition for commutativity:

\begin{definition}[Matrix commutativity]\label{commutative_algos}
The resampling oracle is \emph{matrix-commutative} with respect to  dependence relation $\sim$ if  $A_f A_g = A_g A_f$, for every pair of flaws $f,g$ such that $f \nsim g$. 
\end{definition}

For contrast, let us recall the definition of commutativity from \cite{Kolmofocs}. To avoid confusion, we refer to this other notion as \emph{strong commutativity}.

\begin{definition}[Strong commutativity \cite{Kolmofocs}]\label{commutative_algos1}
The resampling oracle is \emph{strongly commutative} with respect to  dependence relation $\sim$ if  for every pair of flaws $f, g $ such that $f \nsim g$, there is an injective mapping from transitions $\sigma_1 \xrightarrow{f} \sigma_2 \xrightarrow{g}  \sigma_3$ to  transitions $\sigma_1 \xrightarrow{g} \sigma_2' \xrightarrow{f} \sigma_3$, so that  $A_f[\sigma_1,\sigma_2] A_g[\sigma_2, \sigma_3] = A_g[\sigma_1,\sigma_2'] A_f[\sigma_2',\sigma_3]$.
\end{definition}

\begin{observation}\label{implies_original} 
  If the resampling oracle is strongly commutative, then it is matrix-commutative.
\end{observation}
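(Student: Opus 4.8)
The plan is to prove the matrix identity $A_f A_g = A_g A_f$ for a fixed pair $f \nsim g$ by comparing the two products entry by entry. Fix states $\sigma_1, \sigma_3 \in \Omega$. By the definition of the matrix product, $(A_f A_g)[\sigma_1,\sigma_3] = \sum_{\sigma_2 \in \Omega} A_f[\sigma_1,\sigma_2]\, A_g[\sigma_2,\sigma_3]$, and only those $\sigma_2$ with $A_f[\sigma_1,\sigma_2] A_g[\sigma_2,\sigma_3] > 0$ contribute; each such $\sigma_2$ corresponds to a two-step transition $\sigma_1 \xrightarrow{f} \sigma_2 \xrightarrow{g} \sigma_3$.

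Next I would feed these transitions into the commutativity mapping of \cite{Kolmofocs}. It provides an injective map $\phi$ sending each transition $\sigma_1 \xrightarrow{f} \sigma_2 \xrightarrow{g} \sigma_3$ to a transition $\sigma_1 \xrightarrow{g} \sigma_2' \xrightarrow{f} \sigma_3$ with $A_f[\sigma_1,\sigma_2] A_g[\sigma_2,\sigma_3] = A_g[\sigma_1,\sigma_2'] A_f[\sigma_2',\sigma_3]$. The crucial feature --- evident from the notation in the definition --- is that $\phi$ preserves the two endpoints $\sigma_1$ and $\sigma_3$ and modifies only the intermediate state, $\sigma_2 \mapsto \sigma_2'(\sigma_2)$. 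Hence, with $\sigma_1$ and $\sigma_3$ held fixed, injectivity of $\phi$ is precisely injectivity of the assignment $\sigma_2 \mapsto \sigma_2'(\sigma_2)$, so
\begin{align*}
(A_f A_g)[\sigma_1,\sigma_3] &= \sum_{\sigma_2} A_f[\sigma_1,\sigma_2] A_g[\sigma_2,\sigma_3] = \sum_{\sigma_2} A_g[\sigma_1, \sigma_2'(\sigma_2)] A_f[\sigma_2'(\sigma_2), \sigma_3] \\
&\leq \sum_{\sigma_2' \in \Omega} A_g[\sigma_1,\sigma_2'] A_f[\sigma_2',\sigma_3] = (A_g A_f)[\sigma_1,\sigma_3],
\end{align*}
where the inequality holds because the map $\sigma_2 \mapsto \sigma_2'(\sigma_2)$ is injective and all matrix entries are nonnegative. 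As $\sigma_1,\sigma_3$ were arbitrary, this gives $A_f A_g \preceq A_g A_f$.

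Finally, since $\sim$ is symmetric we also have $g \nsim f$, so the identical argument with the roles of $f$ and $g$ interchanged yields $A_g A_f \preceq A_f A_g$. Combining the two entrywise inequalities gives $A_f A_g = A_g A_f$ for every $f \nsim g$, which is exactly Definition~\ref{commutative_algos}. I do not anticipate a genuine obstacle here; the only point requiring care is the observation that the mapping of \cite{Kolmofocs} fixes the endpoints $\sigma_1,\sigma_3$ --- this is what makes the entrywise comparison legitimate, and it also means one never has to track how the map behaves across different choices of $\sigma_1,\sigma_3$.
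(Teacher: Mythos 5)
Your proof is correct and follows essentially the same argument as the paper: both fix the endpoint states, apply the injective mapping of \cite{Kolmofocs} (observing that it preserves endpoints), deduce the one-sided entrywise inequality $(A_fA_g)[\sigma_1,\sigma_3]\le(A_gA_f)[\sigma_1,\sigma_3]$ by injectivity and nonnegativity, and conclude equality by swapping $f$ and $g$.
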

\begin{proof}

  Consider $f \not \sim g$. By symmetry, we need to show that $(A_f A_g)[ \sigma, \sigma'] \leq (A_g A_f)[ \sigma, \sigma']$ for any states $\sigma, \sigma'$.  Let $V$ denote the set of states $\sigma''$ with $A_f[\sigma, \sigma''] A_g[ \sigma'', \sigma'] > 0$. By definition, there is an injective function $F: V \rightarrow \Omega$ such that $A_f[\sigma,\sigma''] A_g[\sigma'', \sigma'] = A_g[\sigma,F(\sigma'')] A_f[F(\sigma''),\sigma']$. Therefore, we have
  $$
  (A_f A_g) [ \sigma, \sigma' ] = \sum_{\sigma'' \in V} A_f[\sigma, \sigma''] A_g[\sigma'', \sigma'] = \sum_{\sigma'' \in V} A_g[\sigma,  F(\sigma'')] A_f[ F(\sigma''), \sigma']
  $$

  Since $F$ is injective, each term $A_g[\sigma, \tau] A_f[ \tau, \sigma']$ is counted at most once in this sum with $\tau = F(\sigma'')$. So  $ (A_f A_g) [ \sigma, \sigma' ] \leq \sum_{\tau \in f} A_g[\sigma,\tau] A_f[ \tau, \sigma'] = (A_g A_f) [\sigma, \sigma']$.
  \end{proof}
  
 \begin{observation}
 \label{cause-obs}
 Suppose the resampling oracle is matrix-commutative. If resampling flaw $f$ can cause flaw $g$, then  $f \sim g$.
\end{observation}
\begin{proof}
Consider some state transition $\sigma \xrightarrow{f} \sigma'$ for $\sigma \notin g, \sigma' \in g$. So $e_{\sigma}^{\top} A_f e_g > 0$. Then  $e_{\sigma}^{\top} A_f A_g  \neq \vec 0 = e_{\sigma}^{\top} A_g A_f$. So $A_f A_g \neq A_g A_f$. 
\end{proof}

\textbf{For the remainder of this paper, we use the word ``commutative'' to mean matrix-commutative. We always assume that the resampling oracle $\mathbf R$ is matrix-commutative unless explicitly stated otherwise.}

\bigskip

We say that a set or multiset $I \subseteq \mathcal F$ is \emph{stable} if $f \not \sim g$ for all distinct pairs $f, g \in I$. We define
$$
A_I = \prod_{f \in I} A_f;
$$
here, if $I$ is a multiset, each $f \in I$ is counted  with its multiplicity in $I$. When the resampling oracle is commutative, this is well-defined (without specifying ordering of $I$) since the matrices $A_f$ all commute.

For a stable multiset $I = \{g_1, \dots, g_t \}$, we define $\langle I \rangle = g_1 \cap \dots \cap g_t$. For a flaw $f$, we write $f \sim I$ if there exists any $g \in I$ with $f \sim g$. Similarly, for stable multisets $I, J$, we write $I \sim J$ if there exist $f \in I, g \in J$ with $f \sim g$.

\section{Witness directed acyclic graphs and matrix bounds}\label{WitnessTree-sec}

Following \cite{ParallelHarrisHaeupler}, the  \emph{witness directed acyclic graph} is our key tool to analyze the Search Algorithm.
\begin{definition}[Witness directed acyclic graph]
A  \emph{witness directed acyclic graph} (abbreviated \emph{wdag}) is a directed acyclic graph $H$, where each vertex $v \in H$ has a label $L(v)$ from $\mathcal F$, and such that for all pairs of vertices $v, w \in H$, there is an edge between $v$ and $w$ (in either direction) if and only if $L(v) \simeq L(w)$.  
\end{definition}

  For intuition, every node $v$ in a wdag corresponds to a resampled flaw $f$, with $L(v) = f$. The edges always point \emph{forward in time}: there is an edge from $v$ to $w$ if the flaw corresponding to $v$ was resampled before $w$. For example, consider a scenario with five resampled flaws $f_1, f_2, f_3, f_1, f_5$ in order (where $f_4 = f_1$ has been resampled twice). We could represent this with the following wdag:

\begin{figure}[H]
\vspace{1.9in}
\begin{center}
\hspace{-3.5in}
\includegraphics[trim = 0.5cm 22.0cm 9cm 5cm,scale=1.0,angle = 0]{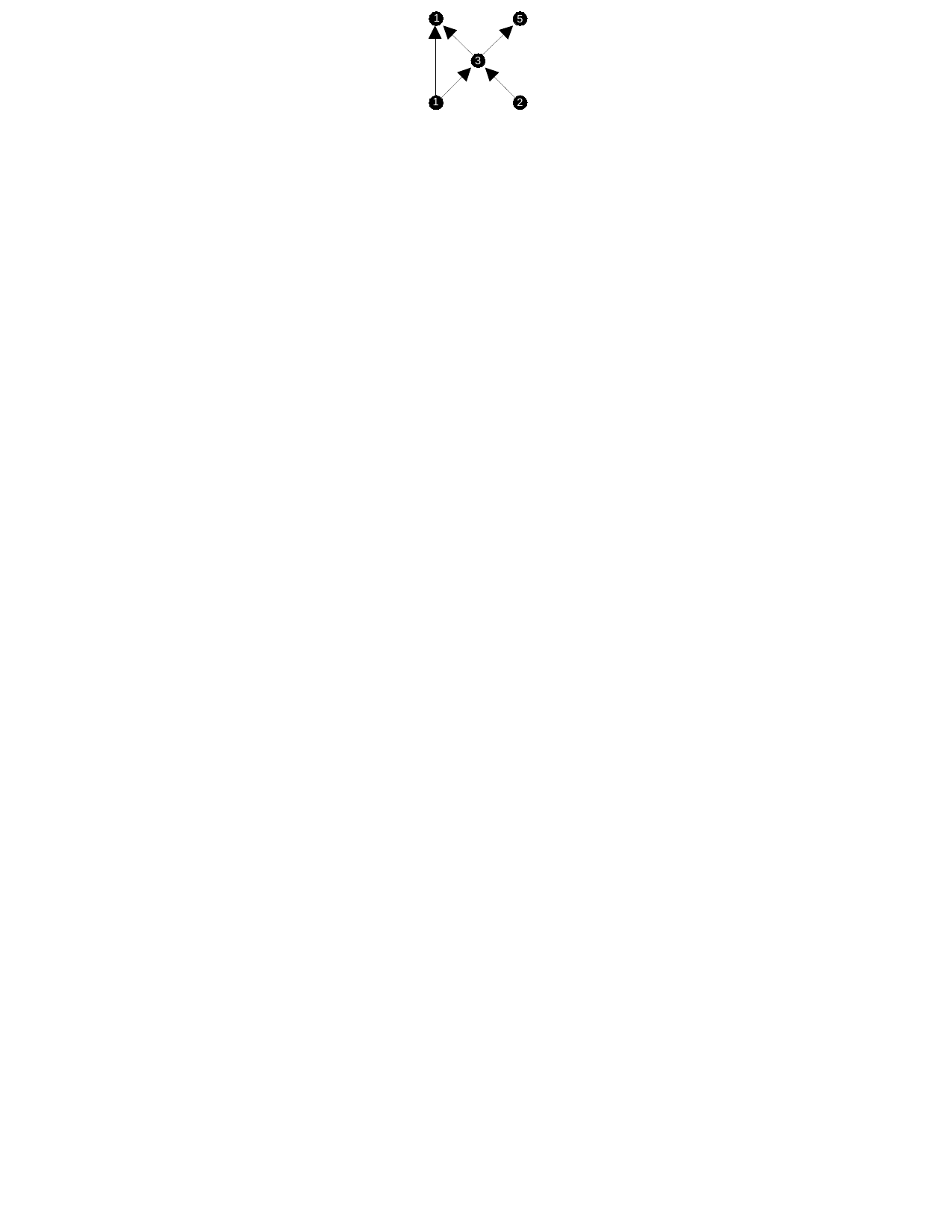}
\vspace{-1.5in}
\caption{\label{fig88} A wdag representing the five indicated flaws.}
\end{center}
\end{figure}
\vspace{-0.2in}

We define $\mathfrak W$ to be the set of wdags. The number of nodes in a wdag $H$ is denoted by $|H|$. Given a flaw $f$ and wdag $H$, we write $f \sim H$ if there exists any node $v \in H$ with $L(v) \sim f$. 

For a wdag $H$ with sink nodes $v_1, \dots, v_k$ labeled $f_1 = L(v_1), \dots, f_k = L(v_k)$, observe that $f_1, \dots, f_k$ are distinct and  that  $\{ f_1, \dots, f_k \}$ is a stable set of flaws; we denote it by $\sink(H)$, and we define $\mathfrak W(I)$ to be the set of wdags with $\sink(H) = I$. We also write $\mathfrak W(f)$ as shorthand for $\mathfrak W( \{f \})$.  We define $\mathfrak S = \bigcup_{f \in \mathcal F} \mathfrak W(f)$ to be the set of \emph{single-sink} wdags.

There is a key connection between wdags and  transition matrices. For any wdag $H$, we  define an associated $|\Omega| \times |\Omega|$ matrix $A_H$ inductively, as follows. If $H = \emptyset$, then $A_H$ is the identity matrix on $\Omega$. Otherwise, we choose an arbitrary source node $v$ of $H$ and set $A_H = A_{L(v)} A_{H - v}$.

\begin{proposition}
  \label{ah-prop1}
The definition of $A_H$ does not depend on the chosen source node $v$. 
\end{proposition}
\begin{proof}
We show it by induction on $|H|$.  When $|H| = 0$ it is vacuous. For the induction step, suppose $H$ has two source nodes $v_1, v_2$. We need to show that we get the same result by recursing on $v_1$ or $v_2$, i.e $A_{L(v_1)} A_{H - v_1} = A_{L(v_2)} A_{H - v_2}$.

  We can apply the induction hypothesis to $H - v_1$ and $H - v_2$, noting that $v_2$ is a source node of $H - v_1$ and $v_1$ is a source node of $H - v_2$. We get $  A_{H - v_1} = A_{L(v_2)} A_{H - v_1 - v_2}, A_{H - v_2} = A_{L(v_1)} A_{H - v_1 - v_2}$.   Thus, in order to show $A_{L(v_1)} A_{H - v_1} = A_{L(v_2)} A_{H - v_2}$, it suffices to show that $A_{L(v_1)} A_{L(v_2)} = A_{L(v_2)} A_{L(v_1)}$.   Since $v_1, v_2$ are both source nodes, we have $L(v_1) \not \sim L(v_2)$. Thus, this follows from commutativity.
\end{proof}

\begin{observation}
\label{ah-prop2}
If $f \not \sim H$ for a flaw $f$ and wdag $H$, then $A_f A_H = A_H A_f$.
\end{observation}
\begin{proof}
We can write $A_H = A_{L(v_1)} \cdots A_{L(v_t)}$ where $v_1, \dots, v_t$ are the nodes of $H$. By hypothesis, we have $L(v_i) \not \sim f$ for all $i$ and the matrices $A_{L(v_i)}$ all commute with $A_f$. So $A_f A_H = A_f  A_{L(v_1)} \cdots A_{L(v_t)} = A_{L(v_1)} \cdots A_{L(v_t)} A_f  = A_H A_f.$
\end{proof}

 One important way of generating wdags is the following: we say that we   \emph{prepend} a given flaw $f$ to a given wdag $H$ when we add a node labeled $f$ with an edge to each other node $v$ with $L(v) \simeq f$.  This operation always results in a new wdag $H'$, which has a source node labeled $f$. Intuitively, this means that $f$ happens before all the events corresponding to the nodes in $H$.

\begin{observation}
If wdag $H'$ is obtained by prepending flaw $f$ to wdag $H$, then $A_{H'} = A_f A_H$.
\end{observation}

\subsection{Wdags and their role in the Search Algorithm}
\label{wdag-construct}
Consider an execution of the Search Algorithm. For each time $t$ with resampled flaw $f_t$, we define a corresponding wdag $G_{t}$, as follows: initially, we set $G_t$ to consist of a singleton node labeled $f_t$. Then, for each time $s = t-1, \ldots, 0$ with resampled flaw $f_s$, there are two cases:
\begin{itemize}
\item If $f_s \sim G_t$, or if $G_t$ has a source node labeled $f_s$,  then prepend $f_s$ to $G_t$.
\item Otherwise, do not modify $G_t$.
 \end{itemize}
 
Note that $\sink(G_t) = \{f_t \}$.  We say a wdag $H$ \emph{appears} if $H$ is isomorphic to some $G_t$; with a slight abuse of notation, we write this simply as $G_t = H$.  
 
  This construction is very similar to the Witness Tree in Moser \& Tardos \cite{MT}.  One of the main ingredients in their proof is the observation that a witness tree $G$ appears with probability at most $\prod_{v \in G } \mu(L(v) ) $, i.e.,  the product of probabilities of the flaws that label its vertices.  (Recall  that $\mu$ denotes the initial state distribution.) Their proof depends on properties of the variable setting and does not extend to other probability spaces.  Our key message is that the new commutativity definition allows us to analogously bound the probability of appearance of a given wdag by the product of transition matrices. Specifically, we show the following. 
  
\begin{lemma}
  \label{witness-tree-lemma_simple}
For any wdag $H$, the probability that $H$ appears is at most $\mu^{\top} A_H \vec 1$.
\end{lemma}
\begin{proof}
We first show that if the Search Algorithm runs for at most $t_{\max}$ steps starting with state $\sigma$, where $t_{\max}$ is an arbitrary integer, then $H$ appears with probability at most $e_{\sigma}^{\top} A_H \vec 1.$ We prove this claim by induction on $t_{\max}$.  The claim holds vacuously for $t_{\max} = 0$, or if $\sigma$ is flawless. Also, if $H$ is a singleton node labeled $f$ and $\sigma \in f$, then $e_{\sigma}^{\top} A_H \vec 1 = 1$ and the claim is vacuous.  

So suppose that $t_{\max} > 0$ and $\sigma$ has a flaw, and that $H$ is not a singleton node with label $f \ni \sigma$.  Then $G_0 \neq H$. So, if $H$ appears,  we have $G_t = H$ for  $t  \in \{1, \dots,  t_{\max} -1 \}$. Now suppose we condition on $\mathbf S$ selecting a flaw $f$ to resample in $\sigma$.  We can view the evolution of the Search Algorithm $\mathbf A$ as a two-part process: first resample $f$, reaching a state $\sigma'$, wherein each potential choice of $\sigma'$ is chosen with probability $A_f[\sigma, \sigma']$. Then execute a new search algorithm $\mathbf A'$ starting at state $\sigma'$, wherein the flaw selection rule $\mathbf S'$ on history $(\sigma', \sigma_2,  \dots, \sigma_t)$ is the same as the choice of $\mathbf S$ on history $(\sigma, \sigma', \sigma_2, \dots, \sigma_t)$.   

 Let $G'_{t-1}$ be the corresponding wdag for $\mathbf A'$. We obtain $G_t$ from $G'_{t-1}$ either by prepending $f$, or by setting $G_t = G'_{t-1}$ when $f \not \sim G'_{t-1}$ and $G'_{t-1}$ has no source node labeled $f$. Consequently, $H$ must satisfy one of the following two mutually exclusive conditions: (i) $H$ has a unique source node $v$ labeled $f$ and $G'_{t-1} = H - v$; or  (ii) $f \not \sim H$ and  $G'_{t-1} = H$.

In case (i), if $H$ appears for the original search algorithm $\mathbf A$ within $t_{\max}$ timesteps, then   $H-v$ must appear for $\mathbf A'$ within $t_{\max}-1$ timesteps. By induction hypothesis, this has  probability at most $e_{\sigma'}^{\top} A_{H-v} \vec 1$ for a fixed $\sigma'$. Summing over $\sigma'$ gives a total probability of 
$$
\sum_{\sigma'} A_f[\sigma, \sigma'] e_{\sigma'}^{\top} A_{H-v} \vec 1 = e_{\sigma}^{\top} A_f  A_{H-v} \vec 1 = e_{\sigma}^{\top} A_H \vec 1
$$
as required.

In case (ii), note that by \Cref{ah-prop2} the matrices $A_f$ and $A_H$ commute. Again, if $H$ appears for the original search algorithm $\mathbf A$ within $t_{\max}$ timesteps, then   $H$ must appear for $\mathbf A'$ within $t_{\max}-1$ timesteps. By induction hypothesis, this has  probability at most $e_{\sigma'}^{\top} A_H \vec 1$ for a fixed $\sigma'$. Summing over $\sigma'$ gives a total probability of 
$$
\sum_{\sigma'} A_f[\sigma, \sigma'] e_{\sigma'}^{\top} A_{H} \vec 1 = e_{\sigma}^{\top} A_f  A_{H} \vec 1 = e_{\sigma}^{\top} A_H A_f \vec 1.
$$

Since $A_f$ is substochastic, this is at most $e_{\sigma}^{\top} A_H \vec 1$.

This completes the induction. By countable additivity, we can compute the probability that $H$ ever appears from starting state $\sigma$, as 
$$
\Pr \bigl( \bigvee_{t=0}^{\infty} G_t = H \bigr)= \lim_{t_{\max} \rightarrow \infty} \Pr \bigl( \bigvee_{t=0}^{t_{\max}-1} G_t = H \bigr) \leq  \lim_{t_{\max} \rightarrow \infty} e_{\sigma}^{\top} A_H \vec 1 = e_{\sigma}^{\top} A_H \vec 1
$$
Finally, integrating over $\sigma$ gives $\sum_{\sigma} \mu[\sigma] e_{\sigma}^{\top} A_H \vec 1 =  \mu^{\top} A_H \vec 1$.
\end{proof}

\begin{corollary}
\label{stepscor}
The expected number of steps of the Search Algorithm is at most $ \sum_{H \in \mathfrak S} \mu^{\top} A_H \vec 1$. 
\end{corollary}
\begin{proof}
For each time $t$ that a flaw  $f$ is resampled, the corresponding $G_t$ is an appearing wdag in $\mathfrak W(f)$. These wdags are all  distinct, since on the $k^{\text{th}}$ resampling of $f$ the wdag $G_t$ has exactly $k$ nodes labeled $f$. So by \Cref{witness-tree-lemma_simple}, the expected number of steps is at most \[
\sum_f \sum_{H \in \mathfrak W(f)} \Pr(\text{$H$ appears}) \leq \sum_f \sum_{H \in \mathfrak W(f)} \mu^{\top} A_H \vec 1 = \sum_{H \in \mathfrak S} \mu^{\top} A_H \vec 1. \qedhere
\]
\end{proof}

\subsection{Estimating sums over wdags}
\label{sec-est}
The statement of \Cref{witness-tree-lemma_simple} in terms of matrix products is very general and powerful, but difficult for calculations.  To use it effectively, as for example in \Cref{stepscor}, we need to bound the sums of the form
$$
  \sum_{H}  \mu^{\top} A_H \vec 1
$$
where $H$ ranges over subsets of $\mathfrak W$. 

	There are two, quite distinct, issues here. First, for each individual wdag $H$, we need to estimate $\mu^{\top} A_H \vec 1 $; second, we need to bound the sum of these quantities over $H$. The second issue is well-studied in terms of the probabilistic LLL, but the first issue is not as familiar. To handle it, following \cite{AIS}, we analyze the matrix product via  spectral bounds of the matrices $A_f$.  Let us define a quantity called the \emph{weight} $w(f)$ of each flaw $f$ by\footnote{The work \cite{AIS} uses a more general definition of \emph{charge}, where the ``benchmark'' probability distribution can be different from the initial probability distribution $\mu$. This can be useful in showing convergence of the Search Algorithm for non-commutative resampling oracles. However, this more general definition does not seem useful for distributional properties and parallel algorithms in the context of commutative resampling oracles. Hence, we adopt the simpler definitions here.}
$$
w(f) :=  \max_{\tau \in \Omega}  \frac{\mu^{\top} A_f e_{\tau}}{ \mu(\tau) } 
$$i.e., the maximum possible inflation of a state probability (relative to its probability under $\mu$) incurred by (i) sampling a state $\sigma$ according to $\mu$; (ii) checking that flaw $f$ holds on $\sigma$; and then (iii) resampling flaw $f$ at $\sigma$.

Extending the definition, we define the \emph{weight} of a wdag $H$ by
$$
w(H) = \prod_{v \in H} w(L(v))
$$
\begin{lemma}
  \label{a1}
  For any event $E \subseteq \Omega$ we have $\mu^{\top} A_H e_E \leq  \mu(E) \cdot w(H)$.   
\end{lemma}
\begin{proof}
We can write $A_H = A_{f_1} \cdots A_{f_t}$ where $f_1, \dots, f_t$ are the labels of the nodes of $H$.  By definition, we have $\mu^{\top} A_f \preceq w(f)  \mu^{\top}$ for any $f$. So:
\[
 \mu^{\top} \negthinspace A_H e_E =   \mu^{\top} \negthinspace A_{f_1} \cdots A_{f_t} e_E  \leq \mu^{\top} \negthinspace w(f_1) A_{f_2} \cdots A_{f_t} e_E \leq \dots \leq w(f_1) \cdots w(f_t) \mu^{\top} \negthinspace e_E = w(H) \mu(E).   \qedhere
\]
\end{proof}

We say that a resampling oracle $\mathbf R$ is \emph{regenerating} if $w(f) = \mu(f)$ for all $f$. This perfectly removes the conditional of the resampled flaw.   The original Moser-Tardos algorithm, and extensions to other probability spaces, can be viewed in terms of regenerating oracles \cite{HV}. An equivalent formulation is that  $\mu$ is a left-eigenvector of each matrix $A_f$, with associated eigenvalue $\mu(f)$, i.e.
\begin{equation}
\label{regen-def}
\forall f \qquad \mu^{\top} A_f = \mu(f) \cdot \mu^{\top}
\end{equation}

From \Cref{a1} (applied with $E = \Omega$) and \Cref{witness-tree-lemma_simple} we get the following immediate corollary:
\begin{corollary}
\label{bndcord0}
Any given wdag $H$ appears with probability at most $ w(H)$. 

In particular, if $\mathbf R$ is regenerating, then it appears with probability at most $\prod_{v \in H} \mu(L(v))$ (i.e. the usual Witness Tree Lemma)
\end{corollary}
  
We emphasize that we are not aware of any direct proof of \Cref{bndcord0}; it seems necessary to first show the matrix bound of \Cref{witness-tree-lemma_simple}, and then project down to scalars.  
  
In light of \Cref{a1}, we define for any flaw set $I$ the key quantities
$$
\Psi(I) = \sum_{\substack{H \in \mathfrak W(I)}} w(H), \qquad \qquad \overline \Psi(I) = \sum_{J \subseteq I} \Psi(J).
$$
  We write $\Psi(f) = \Psi( \{f \})$ for brevity.  Note that $\Psi(I) = 0$ if $I$ is not a stable set. A useful and standard formula (see e.g., \cite[Claim 59]{HV}) is that for any set $I$ we have $$
  \Psi(I) \leq \prod_{f \in I} \Psi(f), \qquad \qquad \overline \Psi(I) \leq \prod_{f \in I} (1 + \Psi(f)).
  $$
 We also write $\Psi_{\mathcal F}, \overline \Psi_{\mathcal F}$ to indicate the role of the flaw set $\mathcal F$, if it is relevant.  
  
  We summarize here a few well-known bounds  on these quantities, based on versions of LLL criteria.
\begin{proposition}
\label{main-bound-sum}
 \begin{enumerate}
\item (Symmetric criterion) Suppose that $w(f) \leq p$ and $|\overline \Gamma(f)| \leq d$ for parameters $p,d$ with $e p d \leq 1$.  Then $\Psi(f) \leq e w(f) \leq e p$ for all $f$.
\item (Neighborhood bound) Suppose that every $f$ has $\sum_{g \in \overline \Gamma(f)} w(g) \leq \tfrac{1}{4}$. Then $\Psi(f) \leq 4 w(f)$ for all  $f$.
\item (Asymmetric criterion) Suppose there is some function $x: \mathcal F \rightarrow [0,1)$ with the property that
$$
\forall f \qquad w(f) \leq x(f)  \prod_{g \in \Gamma(f)} (1 - x(g)).
$$
Then $\Psi(f) \leq \frac{x (f)}{1-x(f)}$ for all $f$.

\item (Restricted cluster-expansion) Suppose there is some function $\eta: \mathcal F \rightarrow [0,\infty)$ with the property that
$$
\forall f \qquad \eta(f) \geq w(f) \cdot \sum_{\substack{I \subseteq \overline \Gamma(f) \\ \text{$I$ stable}}} \prod_{g \in I} \eta(g)
$$
Then $\Psi(f) \leq \eta(f)$ for all $f$.

\item (Cluster-expansion) Suppose $\bowtie$ is a symmetric relation on $\mathcal F$ extending $\simeq$, i.e. $f \simeq g \Rightarrow f \bowtie g$, and there is some function $\eta: \mathcal F \rightarrow [0,\infty)$ with the property that
$$
\forall f \qquad \eta(f) \geq w(f) \cdot \sum_{\substack{I \subseteq \mathcal F \\ g \bowtie f \text{ for all $g \in I$}  \\ g_1 \not \bowtie g_2 \text{ for all distinct $g_1, g_2 \in I$}}} \prod_{g \in I} \eta(g)
$$
Then $\Psi(f) \leq \eta(f)$ for all $f$.
\end{enumerate}
\end{proposition}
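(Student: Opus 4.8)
The plan is to reduce all five bounds to one recursion for the quantities $\Psi(I)$ and then, for each criterion, check that the asserted bound satisfies (a variant of) it. The engine is the ``sink‑removal'' decomposition foreshadowed before the proposition. A wdag $H$ with $\sink(H)=\{f\}$ has a unique sink $v$, labeled $f$; deleting $v$ leaves a wdag $H-v$ whose sink set is a stable subset $J\subseteq\overline{\Gamma}(f)$ (its elements are in‑neighbors of $v$, so their labels are $\sim f$), and conversely, from any stable $J\subseteq\overline{\Gamma}(f)$ and any $H'\in\mathfrak W(J)$ one recovers a unique $H\in\mathfrak W(f)$ by adjoining a sink labeled $f$ with an edge from every vertex whose label is $\sim f$ (the wdag adjacency rule forces exactly this edge set). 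Since $w(H)=\gamma_f\,w(H-v)$, this bijection gives
$$
\Psi(f)=\gamma_f\sum_{\substack{J\subseteq\overline{\Gamma}(f)\\ J\text{ stable}}}\Psi(J),
$$
valid even when both sides are infinite. Because $\Psi(f)$ might a priori be infinite, I would run everything on the truncated sums $\Psi^{(k)}(f)=\sum_{H\in\mathfrak W(f),\,|H|\le k}w(H)$, for which $\Psi^{(k)}(f)=\gamma_f\sum_J\Psi^{(k-1)}(J)$, and let $k\to\infty$ at the end; I would also use the truncated submultiplicativity bound $\Psi^{(k)}(J)\le\prod_{g\in J}\Psi^{(k)}(g)$, which comes from the same ordered cone‑decomposition as the standard bound cited above the proposition.

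For the cluster‑expansion criterion (part 4) I would prove $\Psi^{(k)}(f)\le\eta(f)$ by induction on $k$: using the recursion, truncated submultiplicativity, and the inductive hypothesis,
$$
\Psi^{(k)}(f)=\gamma_f\!\!\sum_{\substack{J\subseteq\overline{\Gamma}(f)\\ J\text{ stable}}}\!\!\Psi^{(k-1)}(J)\ \le\ \gamma_f\!\!\sum_{\substack{J\subseteq\overline{\Gamma}(f)\\ J\text{ stable}}}\prod_{g\in J}\Psi^{(k-1)}(g)\ \le\ \gamma_f\!\!\sum_{\substack{J\subseteq\overline{\Gamma}(f)\\ J\text{ stable}}}\prod_{g\in J}\eta(g)\ \le\ \eta(f),
$$
the last step being exactly the hypothesis on $\eta$. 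Parts 1--3 then follow by picking $\eta$ and verifying this hypothesis: $\eta(f)=e\gamma_f$ for part 1, since $\sum_{J\text{ stable}}\prod_{g\in J}\eta(g)\le\prod_{g\in\overline{\Gamma}(f)}(1+e\gamma_g)\le(1+ep)^d\le e$ as $ep\le 1/d$; $\eta(f)=4\gamma_f$ for part 2, since $\prod_{g\in\overline{\Gamma}(f)}(1+4\gamma_g)\le e^{4\sum_g\gamma_g}\le e\le 4$; and $\eta(f)=x(f)/(1-x(f))$ for part 3, where (a stable $J\subseteq\overline{\Gamma}(f)$ either equals $\{f\}$ or avoids $f$) $\sum_{J\text{ stable}}\prod_{g\in J}\eta(g)\le\eta(f)+\prod_{g\in\Gamma(f)}\tfrac1{1-x(g)}$, and multiplying by $\gamma_f\le x(f)\prod_{g\in\Gamma(f)}(1-x(g))$ gives at most $x(f)/(1-x(f))$.

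The clique‑bound criterion (part 5) does not fit this template, because the generic bound $\Psi(J)\le\prod_{g\in J}\Psi(g)$ discards the clique structure and is too lossy. Instead I would prove the \emph{stronger} bound $\Psi(f)\le\gamma_f\prod_{u\in\mathcal V:\,f\in u}\zeta(u)$ (which implies the claimed one, as each $\zeta(u)\ge 1$), again by induction on $k$. The extra ingredient is the reorganization
$$
\sum_{\substack{J\subseteq\overline{\Gamma}(f)\\ J\text{ stable}}}\prod_{g\in J}\Psi^{(k-1)}(g)\ \le\ \prod_{u\in\mathcal V:\,f\in u}\Bigl(1+\sum_{g\in u}\Psi^{(k-1)}(g)\Bigr),
$$
which holds because every $g\in\overline{\Gamma}(f)$ lies in some clique containing $f$, so choosing one ``representative'' such clique for each element of a stable $J$ injects the stable subsets of $\overline{\Gamma}(f)$ into the systems picking at most one flaw from each clique containing $f$ (injectivity uses that a stable set meets each clique at most once). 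Feeding in the inductive hypothesis $\Psi^{(k-1)}(g)\le\gamma_g\prod_{w\ni g}\zeta(w)$, each factor becomes at most $1+\sum_{g\in u}\gamma_g\prod_{w\ni g}\zeta(w)\le\zeta(u)$ by the clique hypothesis, so $\Psi^{(k)}(f)=\gamma_f\sum_J\Psi^{(k-1)}(J)\le\gamma_f\prod_{u\ni f}\zeta(u)$, closing the induction.

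The only non‑routine points are in setting up the recursion: verifying that the sink‑removal map really is a bijection (so the recursion is an equality), handling the a priori possibility $\Psi(f)=\infty$ via truncation by wdag size, and observing that the membership $f\in\overline{\Gamma}(f)$ makes the recursion self‑referential in a way that is harmless and in fact matches the $I=\{f\}$ term already present in each criterion. I expect part 5 to require the most care, since one must both spot the correct strengthened hypothesis (with the extra factor $\gamma_f$) and supply the representative‑clique counting step; once part 4 is in place, parts 1--3 are bookkeeping.
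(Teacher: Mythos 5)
Your proposal is correct and is essentially the paper's proof, expanded: the paper also proves the cluster-expansion criterion directly by induction (on wdag depth rather than size, but the same argument) and then instantiates $\eta$ to obtain the other four parts. Two small points are worth flagging. First, you claim that part~5 ``does not fit this template,'' but in fact it does: the paper derives it from part~4 with the choice $\eta(f)=\gamma_f\prod_{u\ni f}\zeta(u)$, and verifying the cluster-expansion hypothesis for this $\eta$ is \emph{exactly} the representative-clique counting you carry out in your direct induction, so your ``separate'' proof is the same calculation with the reduction to part~4 unfolded. Second, your parenthetical ``which implies the claimed one, as each $\zeta(u)\ge 1$'' is not the right justification --- $\gamma_f\prod_u\zeta(u)\le\prod_u\zeta(u)$ requires $\gamma_f\le 1$, not $\zeta(u)\ge1$; this is harmless here since $\gamma_f\le1$ holds wherever any of these criteria is nontrivially applicable, but the stated reason is a non sequitur. (The paper uses $x(f)=2\gamma_f$ via the asymmetric criterion for part~2 while you instantiate $\eta(f)=4\gamma_f$ directly in part~4; both are fine and equivalent in strength.)
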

\begin{proof}
For completeness, we briefly sketch the proofs. 

For the cluster-expansion criterion, first observe that given any wdag $G$, we can topologically sort the nodes of $G$ and then add additional edges between any nodes $v, w$ with $L(v) \bowtie L(w)$. In this way, $\sink(G)$ becomes a set which is independent with respect to the denser dependence relation $\bowtie$. Now use induction on wdag size to show that the total weight of all wdags whose sink nodes (under $\bowtie$) are labeled by $I$ is at most  $\prod_{f \in I} \eta(f)$. Here, we use the fact that if wdag $H$ has sink nodes $v_1, \dots, v_t$ labeled by a stable set $I$, then the sink nodes of $H - v_1 - \dots - v_t$ are labeled by a subset of $\bigcup_{f \in I} \{g: g \bowtie f \}$.

For the restricted cluster-expansion criterion, apply the cluster-expansion criterion with $\bowtie$ being $\simeq$.

For the asymmetric criterion, apply the cluster-expansion criterion with $\eta(f) = \frac{x(f)}{1 - x(f)}$

For the neighborhood bound criterion, apply the asymmetric criterion with $x(f) = 2 w(f)$.

For the symmetric criterion, apply the restricted cluster-expansion criterion with  $\eta(f) = e w(f)$.
\end{proof}

To emphasize the connection between various LLL-type bounds, our analysis of wdags, and the behavior of the Search Algorithm, we record the following results:
\begin{proposition}
\label{bndcord1}
Define the parameter
 $$
 W = \sum_{f \in \mathcal F} \Psi(f) = \sum_{H \in \mathfrak S} w(H)
 $$

The expected number of total resampling is at most $W$.   Moreover, under the conditions of \Cref{main-bound-sum}, we have the following bounds on $W$:
\begin{enumerate}
\item If the symmetric criterion holds, then $W \leq e  \sum_f w(f) \leq O( p |\mathcal F| )$.

\item If the neighborhood-bound criterion holds, then $W  \leq 4  \sum_f w(f) \leq O( |\mathcal F| )$.
\item If the asymmetric criterion holds, then $W  \leq  \sum_f \frac{x(f)}{1-x(f)}$.
\item If the cluster-expansion criterion holds, then $W \leq \sum_f \eta(f)$.
\end{enumerate}
\end{proposition}

\section{Simple distributional  properties}
\label{distrib-sec}
One of the most important consequence of commutativity is that it allows us to bound the distribution of objects generated by the Search Algorithm.   As a warm-up, we will use a construction similar to Section~\ref{WitnessTree-sec} for a ``basic'' distributional bound.  Later, in Section~\ref{sec:detailedist}, we show tighter bounds through a more careful constructions of wdags. 

Consider an event $E \subseteq \Omega$, and let $P(E)$ be the probability that $E$ holds in the output of the Search Algorithm. Define $\check \Gamma(E)$ to be the set of flaws $f$ which can cause $E$ to occur, i.e. there is a transition $\sigma' \notin E \xrightarrow f \sigma \in E$.   Our main goal is to bound $P(E)$; typically, we will seek an upper-bound of the form $P(E) \leq (1 + \eps) \Pr_{\Omega}(E)$, for some small value $\eps > 0$.

If $E$ occurs at some time $t$ in the Search Algorithm, we generate a wdag $G_t$ by initializing $G_t = \emptyset$ and then for each time $s = t-1, \dots, 0$ with resampled flaw $f_s$, modifying $G_t$ according to the following rule:
\begin{itemize}
\item If $f_s \sim G_t$, or $G_t$ has a source node labeled $f_s$, or $f_s \in \check \Gamma(E)$, then prepend $f_s$ to $G_t$. 
\item Otherwise, do not modify $G_t$. 
\end{itemize}

Note that $\sink(G_t) \subseteq \check \Gamma(E)$. We say that wdag $H$ \emph{appears for $E$} if $H$ is isomorphic to $G_t$ for any $t$.

\begin{lemma}
  \label{witness-tree-lemma_simple2}
The probability that a given wdag $H$ appears for $E$ is at most $\mu^{\top} A_H e_E$.
\end{lemma}
\begin{proof}
As in \Cref{witness-tree-lemma_simple}, it suffices to show that if the Search Algorithm runs for at most $t_{\max}$ steps starting with state $\sigma$, where $t_{\max}$ is an arbitrary integer, then $H$ appears with probability at most $e_{\sigma}^{\top} A_H e_E.$ We prove this claim by induction on $t_{\max}$.

If $H = \emptyset$ and $\sigma \in E$, then $e_{\sigma}^{\top} A_H e_E = 1$ and the bound holds vacuously. This is the only way that $H$ can appear if $t_{\max} = 0$. So, for the induction step, suppose that $t_{\max} > 0$ and either $H \neq \emptyset$ or $\sigma \not \in E$.  Then $G_0 \neq \emptyset$, and the only way for $H$ to appear is to have $G_t = H$ for $t \in \{1, \dots,  t_{\max}  \}$. Now suppose $\mathbf S$ selects a flaw $f$ to resample in $\sigma$. We can view the evolution of the Search Algorithm $\mathbf A$ as a two-part process: first resample $f$, reaching a state $\sigma'$; then execute a new search algorithm $\mathbf A'$ starting at state $\sigma'$.

 As in \Cref{witness-tree-lemma_simple}, in order for $H$ to appear, one of the following two mutually exclusive conditions must hold: (i) $H$ has a unique source node $v$ labeled $f$ and $G'_{t-1} = H - v$; or  (ii) $f \not \sim H$ and $f \notin \check \Gamma(E)$ and $G'_{t-1} = H$. 

In case (i),  $H - v$ would appear for $E$ in search algorithm $\mathbf A'$ within $t_{\max}-1$ timesteps. By induction hypothesis, this has  probability at most $e_{\sigma'}^{\top} A_{H-v} e_E$ for a fixed $\sigma'$. Summing over $\sigma'$ gives a total probability of  $\sum_{\sigma'} A_f[\sigma, \sigma'] e_{\sigma'}^{\top} A_{H-v} e_E = e_{\sigma}^{\top} A_f  A_{H-v} e_E = e_{\sigma}^{\top} A_H e_E$.

In case (ii),  $H $ would appear for $E$ in search algorithm $\mathbf A'$ within $t_{\max}-1$ timesteps. By induction hypothesis, this has  probability at most $e_{\sigma'}^{\top} A_H e_E$ for a fixed $\sigma'$. Summing over $\sigma'$ gives a total probability of  $\sum_{\sigma'} A_f[\sigma, \sigma'] e_{\sigma'}^{\top} A_{H} e_E = e_{\sigma}^{\top} A_f  A_{H} e_E$. Since $A_f$ commutes with $A_H$, this equals $e_{\sigma}^{\top} A_{H} A_f e_E$. Since $f \notin \check \Gamma(E)$, resampling $f$ can never cause $E$ to occur, and so $e_{\tau}^{\top} A_f e_E = 0$ for any state $\tau \notin E$; equivalently, we have $A_f e_E \preceq e_E$. So, overall, the probability is at most $e_{\sigma}^{\top} A_{H} e_E$ as claimed. 
\end{proof}

This gives us the following crisp bound:
 
\begin{theorem}
\label{crispdist}
$P(E) \leq \mu(E) \overline{\Psi}(\check \Gamma(E))$.
\end{theorem}
\begin{proof}
If $E$ is ever true, then some wdag $H$ appears for $E$, where necessarily $\sink(H) \subseteq \check \Gamma(E)$. A union bound over such wdags gives $P(E) \leq \sum_{I \subseteq \check \Gamma(E)} \sum_{H \in \mathfrak W(I)} \mu^{\top} A_H e_E$. By \Cref{a1}, this is at most $\sum_{I \subseteq \check \Gamma(E)} \sum_{H \in \mathfrak W(I)} w(H) \mu(E) = \mu(E) \overline{\Psi} (\check \Gamma(E))$.
\end{proof}

We can combine this with common LLL criteria to obtain more readily usable bounds; the proofs are immediate from bounds on $\Psi$ shown in \Cref{main-bound-sum}.
\begin{proposition}
\label{main-bound-sum2}
Under four criteria of \Cref{main-bound-sum}, we have the following estimates for $P(E)$:
\begin{enumerate}
\item If the symmetric criterion holds,  then  $P(E) \leq \mu(E) \cdot e^{e |\check \Gamma(E)|  p}$.

\item If the neighborhood-bound criterion holds, then $P(E) \leq \mu(E) \cdot e^{4 \sum_{f \in \check \Gamma(E)} w(f)}$.

\item If function $x$ satisfies the asymmetric criterion, then $P(E) \leq \mu(E) \cdot \prod_{f \in \check \Gamma(E)} \frac{1}{1 - x(f)}$.

\item If function $\eta$ satisfies the cluster-expansion criterion,  then $P(E) \leq \mu(E) \cdot \sum_{\substack{ I \subseteq \check \Gamma(E) \\ \text{$I$ stable}}}  \prod_{g \in I} \eta(g)$.
\end{enumerate}
\end{proposition}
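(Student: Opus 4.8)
The plan is to combine Theorem~\ref{distthm1a} and Corollary~\ref{distthm1c} with the scalar estimates for $\Psi$ collected in Proposition~\ref{main-bound-sum}, in the same spirit as the proof of that proposition. The four cases differ only in which bound on $\Psi$ is plugged in and in how the resulting product over stable orderable sets is simplified, so I would treat the cluster-expansion case as the master case and derive the other three from it.

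First, for the cluster-expansion criterion: apply Corollary~\ref{distthm1b}, which gives $P(E) \leq \mu(E) \sum_{I \in \mathfrak I(E)} \Psi(I)$. By the cluster-expansion bound of Proposition~\ref{main-bound-sum}, $\Psi(f) \leq \eta(f)$ for every flaw $f$, and by the standard product formula $\Psi(I) \leq \prod_{f \in I} \Psi(f) \leq \prod_{g \in I} \eta(g)$ for any stable set $I$. Since every $I \in \mathfrak I(E)$ is a stable set of flaws in $\tilde \Gamma(E)$, summing over $I \in \mathfrak I(E)$ yields exactly $P(E) \leq \mu(E) \cdot \sum_{I \in \mathfrak I(E)} \prod_{g \in I} \eta(g)$, which is item~4.

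Second, for the coarser three criteria I would instead start from Corollary~\ref{distthm1c}, $P(E) \leq \mu(E) \overline\Psi(\tilde\Gamma(E))$, together with the inequality $\overline\Psi(I) \leq \prod_{f \in I}(1 + \Psi(f))$ valid for stable $I$ --- but here $\tilde\Gamma(E)$ need not be stable, so I would note that $\overline\Psi$ is defined as a sum over \emph{all} subsets $J \subseteq \tilde\Gamma(E)$ of $\Psi(J)$, and that each such $\Psi(J)$ is supported on the stable subsets of $J$ (non-stable $J$ contribute $\mathfrak W(J) = \emptyset$), so $\overline\Psi(\tilde\Gamma(E)) \leq \prod_{f \in \tilde\Gamma(E)}(1 + \Psi(f))$ still holds by the multiplicativity argument of \cite[Claim 59]{HV}. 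Then: for the asymmetric criterion $\Psi(f) \leq \frac{x(f)}{1-x(f)}$, so $1 + \Psi(f) \leq \frac{1}{1-x(f)}$, giving item~3. For the neighborhood-bound criterion $\Psi(f) \leq 4\gamma_f$, so $1 + \Psi(f) \leq e^{4\gamma_f}$, and the product over $f \in \tilde\Gamma(E)$ telescopes into $e^{4\sum_{f \in \tilde\Gamma(E)}\gamma_f}$, giving item~2. For the symmetric criterion $\Psi(f) \leq ep$, so $1 + \Psi(f) \leq e^{ep}$, and since $|\tilde\Gamma(E)|$ flaws can cause $E$ the product is at most $e^{e|\tilde\Gamma(E)|p}$, giving item~1.

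The only mild subtlety --- and the step I would be most careful about --- is the passage $\overline\Psi(I) \leq \prod_{f\in I}(1+\Psi(f))$ when $I = \tilde\Gamma(E)$ is not itself stable, since the clean product formula quoted in the excerpt is stated only for stable sets; as sketched above this is resolved by observing that $\Psi(J)$ vanishes on non-stable $J$ and re-running the standard multiplicative estimate, so no genuine difficulty arises. Everything else is a direct substitution of the $\Psi$-bounds from Proposition~\ref{main-bound-sum}, exactly mirroring the reductions used in its own proof.
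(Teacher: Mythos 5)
Your proof is correct and takes essentially the same route as the paper, which simply states that the claims are immediate from the bounds on $\Psi$ in Proposition~\ref{main-bound-sum} combined with Theorem~\ref{distthm1a} and Corollary~\ref{distthm1c}; your substitutions and the observation that $\Psi(J)=0$ for non-stable $J$ (so the product bound on $\overline\Psi$ extends to the possibly non-stable $\tilde\Gamma(E)$) are exactly the intended fleshing-out.
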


We remark that Iliopoulos \cite{LLLWTL} had shown a bound similar to \Cref{crispdist}, but with three additional technical restrictions: (i) it requires strong commutativity; (ii) it requires the construction of a commutative resampling oracle for the event $E$ itself; and (iii)  it gives a strictly worse bound for non-regenerating oracles.

\section{Parallel algorithms} 
\label{sec:parallel}
Moser \& Tardos \cite{MT} described a simple parallel version of their resampling algorithm:
\begin{algorithm}[H]
\centering
\begin{algorithmic}[1]
\State Draw state $X$ from distribution $\mu$
\While{some bad-event is true on $X$}
\State{Select some arbitrary maximal independent set (MIS) $I$ of bad-events true on $X$} 
\State{Resample, in parallel, all variables $X_i$ involved in events in $I$}
\EndWhile
\end{algorithmic}
\caption{Parallel Moser-Tardos algorithm}
\label{mt-par-alg}
\end{algorithm}

This algorithm depends heavily on the properties of the variable LLL, as it requires that bad-events which are independent in the LLL sense are also computationally independent.  Parallel algorithms have been developed for a number of other probability spaces \cite{PermHarris, LLLLBeyond}, with intricate and ad-hoc analyses. Based on commutativity, Kolmogorov \cite{Kolmofocs} proposed a generic framework which we summarize as follows:
\begin{algorithm}[H]
\centering
\begin{algorithmic}[1]
\State Draw state $\sigma$ from distribution $\mu$
\While{some flaw holds on $\sigma$}
\State{Set $V$ to be the set of flaws currently holding on $\sigma$}
\While{$V \neq \emptyset$}
\State Select, arbitrarily, a flaw $f \in V$.
\State Update $\sigma \leftarrow \mathbf R_{\sigma} (\sigma)$.
\State Remove from $V$ all flaws $g$ such that $\sigma \notin g$ or $f \simeq g$
\EndWhile
\EndWhile
\end{algorithmic}
\caption{Generic parallel resampling framework}
\label{kolm-alg}
\end{algorithm}

We emphasize that this is a \emph{sequential} algorithm, which can be viewed as a version of the Search Algorithm with an unusual flaw-selection rule. Each iteration of the main loop (lines 3 -- 7) is called a \emph{round}.  Kolmogorov showed that if $\mathbf R$ is strongly commutative, then Algorithm~\ref{kolm-alg} terminates after polylogarithmic rounds with high probability. Harris \cite{ParallelHarris} further showed that if $\mathbf R$ satisfies a property called \emph{obliviousness} (see Section~\ref{comp-sec}), then each round can be simulated in polylogarithmic time. These two results combine to give an overall RNC search algorithm.   Most known parallel local search algorithms, including  Algorithm~\ref{mt-par-alg}, fall into this framework.

 We  will extend these results to the matrix-commutative setting. For $k = 1, 2, \dots$, define $V_k$ to be the set of flaws $V$ in round $k$, and define $I_k$ to be the set of flaws which are actually resampled at round $k$ (i.e. a flaw $f$ selected at some iteration of line 5). Note that each $I_k$ is a stable set and $I_k \subseteq V_k$. 
 
  Let $b_k = \sum_{i < k} |I_i|$ be the total number of resamplings made before round $k$; thus $b_1 = 0$, and when we ``serialize'' Algorithm~\ref{kolm-alg} and view it as an instance of the Search Algorithm, the resamplings in round $k$ of Algorithm~\ref{kolm-alg} correspond to the resamplings at times $b_{k}, \dots, b_{k+1} - 1$ of the Search Algorithm.
\begin{proposition}
\label{gaha0}
 For each $f \in V_k, k \geq 2$ there exists $g \in I_{k-1}$ with $f \simeq g$.
\end{proposition}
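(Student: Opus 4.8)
The plan is to prove the contrapositive: I will show that if $f \in V_k$ but $f \not\sim g$ for every $g \in I_{k-1}$, then $f$ would actually have survived into round $k-1$ as well (i.e.\ $f \in V_{k-1}$) and would never have been removed from $V$ during round $k-1$, which contradicts the fact that round $k-1$'s inner loop terminates only when $V = \emptyset$. Concretely: a flaw $f \in V_k$ holds on the state $\sigma$ at the start of round $k$. First I would trace backward to the start of round $k-1$. The state at the start of round $k$ is the result of applying the resampling oracles for the flaws of $I_{k-1}$ (in some order) to the state $\sigma^{(k-1)}$ at the start of round $k-1$. Since $f \not\sim g$ for all $g \in I_{k-1}$, the defining property of the dependence relation $\sim$ (no $\mathfrak R_g$ with $g \not\sim f$ can map a state outside $f$ into $f$) guarantees that $f$ already held on $\sigma^{(k-1)}$ — otherwise some $\mathfrak R_g$ would have introduced $f$. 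Hence $f$ was placed into $V_{k-1}$ on line 3 of round $k-1$.

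Next I would argue $f$ was never removed from $V$ during round $k-1$. Removal happens on line 7 for one of two reasons: either (i) after some resampling $f$ no longer holds on the current state, or (ii) the resampled flaw $g$ satisfies $g \sim f$. Reason (ii) is ruled out directly: every flaw resampled in round $k-1$ lies in $I_{k-1}$, and by hypothesis none of those is $\sim f$. For reason (i), I would again invoke the dependence property: resampling any $g \not\sim f$ cannot remove $f$ either — more precisely, the property as stated says $\mathfrak R_g$ never maps a state in $\Omega - f$ into $f$; I need the companion direction, that it also never maps a state in $f$ out of $f$. This is where I should be a little careful: the excerpt's phrasing of $\sim$ is about not \emph{introducing} a flaw. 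However, within Algorithm~\ref{kolm-alg} the removal condition (i) is checked against the \emph{same} state on which $f$ must still hold for $f \in V_k$ to make sense, so a cleaner route is: since $f \in V_k$, $f$ holds on the state at the end of round $k-1$, and $f$ was in $V$ at the start of round $k-1$; the only way it could have left $V$ and yet be back (holding) would require reintroduction by some resampling, and all resamplings in round $k-1$ are of flaws $\not\sim f$, which cannot introduce $f$. Thus $f$ remained in $V$ throughout round $k-1$, so round $k-1$'s inner loop cannot have terminated — contradiction.

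The main obstacle I anticipate is matching the exact bookkeeping of lines 5--7 of Algorithm~\ref{kolm-alg} with the one-directional way the dependence relation $\sim$ is defined in the excerpt. The relation is stated purely as ``$\mathfrak R_f$ never maps $\Omega - g$ into $g$,'' so the argument must be arranged so that I only ever use this ``cannot introduce'' direction, never a ``cannot destroy'' direction. The key observation that makes this work is that the hypothesis ``$f \not\sim g$ for all $g \in I_{k-1}$'' simultaneously (a) blocks removal-reason (ii), and (b) via the ``cannot introduce'' direction, forces $f$ to have held continuously — in particular at the start of round $k-1$ — so that $f$ genuinely enters $V_{k-1}$ and then has no valid exit. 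Once this is set up, the rest is a short and routine contradiction with the termination condition $V = \emptyset$ of the inner while-loop.
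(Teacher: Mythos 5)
Your proof is correct and is essentially the contrapositive reformulation of the paper's own argument: the paper does a direct case split on whether $f \in V_{k-1}$, uses ``cannot introduce'' in the case $f \notin V_{k-1}$, and in the case $f \in V_{k-1}$ observes that $f$ must be removed from $V$ (since the inner loop terminates), tracing removal either to a related resampling directly or to $f$ becoming false and being reintroduced by a related resampling. Your contrapositive packaging (assume no $g \sim f$ in $I_{k-1}$, show $f$ persists in $V$ through round $k-1$, contradict termination) uses exactly the same two ingredients — line~7's removal conditions and the one-directional ``cannot introduce'' property applied backward along the round-$(k-1)$ trajectory — and your worry about needing a ``cannot destroy'' direction is correctly resolved by your backward-in-time argument, which is the same move the paper makes implicitly.
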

\begin{proof}
First, suppose that $f \notin V_{k-1}$. In this case, by \Cref{cause-obs}, the only way $f$ could become true at round $k$ would be that some $g \sim f$ was resampled at round $k-1$, i.e. $g \in I_{k-1}$. Otherwise, suppose that $f \in V_{k-1}$. Then either it was removed from $V_{k-1}$ due to resampling of some $g \simeq f$, or $f$ became false during round $k-1$.  In the latter case, since it later become true at the beginning of round $k$, some other $g'  \sim f$ was resampled in round $k-1$ after $g$.
\end{proof}

For a node $v$ in a wdag $G$, we define the \emph{depth} of $v$ to be the length of the longest path to any sink node; we define the \emph{depth} of $G$ to be the maximum depth of its vertices.

\begin{proposition}
\label{gaha1}
For each $t \in \{ b_k, \dots, b_{k+1} - 1 \}$ the corresponding wdag $G_t$ as constructed in Section~\ref{wdag-construct} has depth precisely $k$.
\end{proposition}
 \begin{proof}
 Fix $t$, and consider the partial construction of $G_t$ by adding nodes backwards in time for each flaw $f_s$ resampled at times $s = t, t-1, \dots, b_j$; let $H_j$ be the resulting wdag. So $H_1 = G_t$. We show by backwards induction on $j$ that each $H_j$ has depth precisely $k - j + 1$, and the nodes $v \in H_j$ with depth $k - j + 1$ correspond to resamplings in round $j$. 
 
 The base case $j = k$ is clear, since then $H_j$ consists of a singleton node corresponding to the resampling at time $t$ in round $k$.

For the induction step,  observe that $H_j$ is formed from $H_{j+1}$ by adding nodes corresponding to resamplings in $I_j$. By induction hypothesis, $H_{j+1}$ has depth $k-j$. Since $I_j$ is a stable set, we have $\depth(H_{j}) \leq 1 + \depth(H_{j+1}) =  k-j+1$ and furthermore the nodes at maximal depth correspond to resamplings in $I_j$. So we just need to show that there is at least one such node.

Consider any node $v$ of $H_{j+1}$ with depth $k-j$; by induction hypothesis this corresponds to a resampling in round $j+1$. By \Cref{gaha0}, we have $L(v) \simeq f_{s}$ for some time $s$ in round $j$. The procedure for generating $G_t$ will thus add a node labeled $f_s$ with an edge to $v$; this node has depth $k-j+1$ in $H_j$. 

This completes the induction.  The stated bound then holds since $G_t = H_1$.
\end{proof}

\begin{proposition}
\label{gaha11}
For any flaw $f$ and $k \geq 1$, we have $\Pr( f \in V_k) \leq \sum_{\substack{H \in \mathfrak W(f) \\ \depth(H) = k}} \mu^{\top} A_H \vec 1$.
\end{proposition}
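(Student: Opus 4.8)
The plan is to reduce the event $f \in V_k$ to the appearance of a depth-$k$ wdag with sink $f$, by perturbing the flaw-selection rule so that $f$ is the very first flaw resampled in round $k$. Concretely, fix whatever (possibly randomized) flaw-selection rule $\mathcal{S}$ is used to run Algorithm~\ref{kolm-alg}, and define a modified rule $\mathcal{S}'$ as follows: for rounds $1,\dots,k-1$ it copies $\mathcal{S}$ exactly; in round $k$, if $f \in V_k$ it selects $f$ at the first iteration of the inner loop (line~5), and otherwise, and at all later steps, it behaves arbitrarily (say, again like $\mathcal{S}$). This $\mathcal{S}'$ is still a legitimate round-structured rule, so the analysis of Algorithm~\ref{kolm-alg} --- in particular Proposition~\ref{gaha1} --- applies verbatim to it. Coupling the $\mathcal{S}$- and $\mathcal{S}'$-executions on the initial draw from $\mu$ and on the resampling randomness, the two executions agree through the end of round $k-1$; hence the state at the start of round $k$, and therefore the event $f \in V_k$ itself, is identical in the two couplings, so $\Pr_{\mathcal{S}}(f\in V_k)=\Pr_{\mathcal{S}'}(f\in V_k)$.

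Next I would show that, under $\mathcal{S}'$, the occurrence of $f\in V_k$ forces some $H\in\mathfrak H_{Q_1}(f)$ with $\depth(H)=k$ to appear. Indeed, if $f\in V_k$ then $V_k\neq\emptyset$, so round $k$ performs at least one resampling, and by construction the first such resampling is of the flaw $f$; in the serialized view of Algorithm~\ref{kolm-alg} as an instance of Algorithm~\ref{ls-alg}, this is the resampling at time $t=b_k+1$, i.e. $f_{b_k+1}=f$. By Proposition~\ref{gaha1} the wdag $G^{\hat T}_{b_k+1}=\text{GenWitness}(Q_1,\hat T,b_k+1)$ has depth precisely $k$, and by Proposition~\ref{single-sink-prop} it has a single sink node, which is the node created at initialization and hence is labeled $f_{b_k+1}=f$; thus $G^{\hat T}_{b_k+1}\in\mathfrak H_{Q_1}(f)$ and has depth $k$. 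Consequently the event $\{f\in V_k\}$ (under $\mathcal{S}'$) is contained in the event that at least one wdag from the family $\{H\in\mathfrak H_{Q_1}(f):\depth(H)=k\}$ appears.

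To finish, I would take a union bound over this (countable) family and invoke Lemma~\ref{witness-tree-lemma}, which gives $\Pr_{\mathcal{S}'}(H\text{ appears})\le\mu^{\top}A_H\vec 1$ for each fixed $H$; combined with the equality of the two probabilities this yields $\Pr_{\mathcal{S}}(f\in V_k)=\Pr_{\mathcal{S}'}(f\in V_k)\le\sum_{H\in\mathfrak H_{Q_1}(f),\,\depth(H)=k}\mu^{\top}A_H\vec 1$, as claimed. The only point requiring genuine care is the bookkeeping around the modified rule: one must check that $\mathcal{S}'$ is indeed an admissible instance of Algorithm~\ref{kolm-alg} (so that Proposition~\ref{gaha1} is applicable to it) and that ``the first resampling of round $k$'' in the $\mathcal{S}'$-execution is precisely the serialized step $b_k+1$ and really resamples $f$; once this is pinned down, the remaining steps are a routine union bound. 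An alternative, which avoids modifying the rule but is messier to make rigorous, would be to build the required wdag directly from the wdag of the round-$(k-1)$ resampling of a neighbor $g\sim f$ guaranteed by Proposition~\ref{gaha0}.
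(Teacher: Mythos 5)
Your proof is correct and follows essentially the same approach as the paper: both modify the flaw-selection rule so that $f$ is the first flaw resampled in round $k$ (the paper calls it $\mathcal S_f$), then invoke Proposition~\ref{gaha1} to identify the resulting depth-$k$ wdag with sink $f$, and finish with a union bound via Lemma~\ref{witness-tree-lemma}. The extra care you take in spelling out the coupling argument and in checking that the modified rule remains round-structured is a sound (if slightly more verbose) treatment of the same point the paper handles in one sentence.
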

\begin{proof}
As discussed, Algorithm~\ref{kolm-alg} can be viewed as an instantiation of the Search Algorithm with a certain flaw selection rule $\mathbf S$. For fixed $f$ and $k$, consider a new flaw selection rule $\mathbf S_{f,k}$, which agrees with $\mathbf S$ up to round $k$, and then selects $f$ to resample at round $k$ if it is true. The behavior of the Search Algorithm for $\mathbf S$ and $\mathbf S_{f,k}$ is identical up through the first $b_{k}$ resamplings. Subsequently, we have $f \in V_k$ if and only if the Search Algorithm with $\mathbf S_{f,k}$ selects $f$ at time $t = b_{k}$. In this case, by \Cref{gaha1}, the corresponding wdag $G_t \in \mathfrak W(f)$ would have depth $k$.

 So we can bound the probability of $f \in V_k$ by a union bound over such wdags $H \in \mathfrak W(f)$ of depth $k$ and applying \Cref{witness-tree-lemma_simple}.
\end{proof}

As is usual for the parallel LLL, we analyze the runtime by using an ``inflated'' weight function for some $\eps > 0$ defined as
$$
w_{\epsilon}(H) = w(H) (1+\epsilon)^{|H|} = \prod_{v \in H}  (1+\epsilon) w(L(v))  \qquad \qquad 
W_{\epsilon} = \sum_{H \in \mathfrak S} w_{\epsilon}(H)
$$

Bounding $W_{\epsilon}$ is very similar to bounding $W = W_0$ with a small ``slack'' in the weights. 
With this definition, we get the following bounds:
\begin{proposition}
\label{gaha33}
\begin{enumerate}
\item  $\sum_k \bE[|V_k|] \leq W$.
\item  For any integer $t \geq 1$ and any $\eps > 0$, the probability that Algorithm~\ref{kolm-alg} runs for more than $2 \ell$ rounds is at most $(1+\eps)^{-\ell} W_{\eps} / \ell$. 
\item For $\eps, \delta \in (0,\tfrac{1}{2})$, Algorithm~\ref{kolm-alg} terminates in $O( \frac{\log(1/\delta + \epsilon W_\epsilon)}{\epsilon})$ rounds with probability at least $1 - \delta$.
 \end{enumerate}
\end{proposition}
\begin{proof}
First, by \Cref{a1} and \Cref{gaha11}, we have
$$
\bE[|V_k|] \leq \sum_f \sum_{\substack{H \in \mathfrak W(f), \\ \depth(H) = k}} w(H) = \sum_{\substack{H \in \mathfrak S, \\ \depth(H) = k} } w(H),
$$
so $\sum_k \bE[|V_k|] \leq \sum_{\substack{H \in \mathfrak S}} w(H) = \sum_f \Psi(f) = W$. 

For the second claim, consider random variable $Y = \sum_{k \geq \ell} |V_k|$. If Algorithm~\ref{kolm-alg} reaches iteration $2 \ell$, then necessarily $V_k \neq \emptyset$ for $k = \ell, \dots, 2 \ell$, and so $Y \geq \ell$. By  Markov's inequality applied to $Y$, we thus get
$$
\Pr( \text{Alg reaches round $2 \ell$}) \leq \Pr( Y \geq \ell ) \leq \bE[Y]/ \ell  \leq \sum_{H \in \mathfrak S, \depth(H) \geq \ell} w(H)/\ell. 
$$

We can then calculate
\[
 \sum_{\substack{H \in \mathfrak S, \depth(H) \geq \ell}} w(H)  =   \sum_{\substack{H \in \mathfrak S, \depth(H) \geq \ell}} w_{\epsilon} (H) (1+\epsilon)^{-|H|}    \leq   (1+\epsilon)^{-\ell}  \sum_{\substack{H \in \mathfrak S}} w_{\epsilon} (H) =  (1+\epsilon)^{-\ell} W_{\epsilon}.
\]

The third claim follows from the second claim via Markov's inequality.
\end{proof}

 Using standard estimates (see \cite{ParallelHarrisHaeupler, Kolmofocs, AIS}) we get the following bounds:
\begin{proposition}
\begin{enumerate} 
\item If the resampling oracle is regenerating and the probability vector $p (1+\epsilon)$ satisfies the LLLL criterion, then $W_{\epsilon/2} \leq O(m/\epsilon)$ and Algorithm~\ref{kolm-alg} terminates after $O( \frac{ \log (m/\delta)}{\epsilon})$ rounds with probability $1 - \delta$.
\item If $w(f) \leq p$ and $|\overline \Gamma(f)| \leq d$ for alls flaws $f$, where $e p d (1+\epsilon) \leq 1$, then $W_{\epsilon/2} \leq O(m/\epsilon)$ and Algorithm~\ref{kolm-alg} terminates after $O( \frac{ \log (m/\delta)}{\epsilon})$ rounds with probability at least $1 - \delta$.
\item If the resampling oracle is regenerating and oblivious and satisfies the computational requirements of \cite{ParallelHarris} for input length $n$, then with probability $1 - 1/\poly(n)$ the algorithm of \cite{ParallelHarris} terminates in $O( \frac{ \log^4(n + \epsilon W_\epsilon) }{\epsilon})$ time on an EREW PRAM.
 \end{enumerate}  
\end{proposition}

\section{Compositional properties for resampling oracles} 
\label{comp-sec}
In many settings, the flaws and resampling oracles are built out of simpler, ``atomic'' events. In \cite{ParallelHarris}, Harris described a generic construction, and generic parallel algorithm, when the atomic events satisfy an additional property called \emph{obliviousness}. Let us now review this construction, and how it works with commutativity. 

Consider a set $\mathcal A$ of events, and a symmetric dependence relation $\sim$. It is allowed, but not required, to have $f \sim f$ for $f \in \mathcal A$. We refer to the elements of $\mathcal A$ as \emph{atoms}. These should be thought of as ``pre-flaws'', that is, they have the \emph{structural}  properties of a resampling oracle, but do not necessarily satisfy any convergence condition such as the LLLL.  

The obliviousness definition in \cite{ParallelHarris} can be formulated as follows:

\begin{definition}[Explicitly-oblivious resampling oracle \cite{ParallelHarris}]
  \label{obliv-def}

  The resampling oracle $\mathbf R$ is \emph{explicitly-oblivious} if each $\mathbf R_f$ can be implemented by drawing a random seed $r$ from a set $R_f$ and setting $\sigma' = \mathbf R_f(\sigma) =  F_f( \sigma, r)$ for a \emph{deterministic} function $F_f$. Furthermore, for each pair $f, g \in \mathcal A$ with $f \not \sim g$, there is a set $R_{f;g} \subseteq R_f$ such that $$
(F_f(\sigma, r) \in g ) \Leftrightarrow (\sigma \in f \cap g \wedge  r \in R_{f;g})
  $$
\end{definition}

We list a number of probability spaces with this property; see \cite{ParallelHarris} for further details and proofs.
\begin{enumerate}
\item \textbf{Variables:} This is easy. The probability space $\Omega$ has $n$ independent variables $X_1, \dots, X_n$. For each pair $(i,y)$, we have an atom $f_{iy} = \{ X_{i}  = y \}$. We have $f_{iy} \sim f_{i'y'}$ iff $i = i', y \neq y'$. The space $R_{f_{iy}}$ is defined by drawing a value $y'$ from the distribution of $X_i$, and setting $F_f(X, y') = (X_1, \dots, X_{i-1}, y', X_{i+1}, \dots, X_n)$.

\item \textbf{Permutations:} $\Omega$ is the uniform distribution on permutations $\pi$ on $\{1, \dots, n \}$.  For each pair $(x,y)$, we have an atom $f_{xy} = \{ \pi x  = y \}$. We have $f_{xy} \sim f_{x'y'}$ iff $x = x', y \neq y'$ or $x \neq x', y = y'$. The space $R_{f_{xy}}$ is defined by choosing a value $z$ uniformly from $ \{1, \dots, n \}$, and setting $F(\pi, z) = (y \ z) \pi$. (Here $(y \ z)$ is the permutation which swaps $y$ with $z$.)

\item \textbf{Clique perfect matchings:}  $\Omega$ is the uniform distribution on perfect matchings $M$ of the $n$-clique, where $n$ is even. For each pair $(x,y)$, we have an atom $f_{xy} = \{ \{x, y \} \in M \}$. Note that $f_{xy} = f_{yx}$.  We have $f_{xy} \sim f_{x'y'}$ iff $| \{ x, y \} \cap \{x', y' \} |= 1 $. For $x < y$, the space $R_{f_{xy}}$ is defined by choosing a value $z$ uniformly from $\{1, \dots, n \} \setminus \{x \}$, and setting $F(M, z) = (y \ z) M$ (with the natural left-group action of permutations on matchings).

\item \textbf{Hamiltonian cycles:} $\Omega$ is the uniform distribution on $n$-cycle permutations.  For each sequence of distinct elements $\vec x = x_1, \dots, x_k$ there is an atom $f_{x} = \{ \pi x_i = x_{i+1}: i = 1, \dots, k-1 \}$. We have $f_{x} \sim f_{x'}$ iff $\{x_1, \dots, x_k \} \cap \{x'_1, \dots, x'_{k'} \} \neq \emptyset$. The resampling oracle for this space is too complicated to explain here.
\end{enumerate}

While this definition of obliviousness is critical for the parallel algorithm, we use a simpler and more general matrix-based notion.

\begin{definition}[Matrix-oblivious resampling oracle]
The resampling oracle $\mathbf R$ is \emph{matrix-oblivious} if for each stable set $C$ and atom $f \not \sim C$ there holds $ A_f e_{\langle C \rangle} \propto e_{f \cap \langle C \rangle}$.

 (Recall that $\langle C \rangle$ is the intersection of the atoms in $C$.)
\end{definition}

\begin{proposition}
\label{obl-p0}
If $\mathbf R$ is explicitly-oblivious but not necessarily commutative, then $\mathbf R$ is matrix-oblivious.
  \end{proposition}
  \begin{proof}
  Let $C = \{g_1, \dots, g_k \}$ and define $g = \langle C \rangle$. Observe that, when implementing $\sigma' = \mathbf R_f(\sigma)$ by drawing a seed $r$ from $R_f$, we have $\sigma' \in g$ if and only if $r \in R_{f;g_1} \cap \dots \cap R_{f;g_k}$ and $\sigma \in f \cap g_1 \cap \dots \cap g_k = f \cap g$.  In particular, $e_{\sigma}^{\top} A_f e_g$ is constant for $\sigma \in f \cap g$ and hence $A_f e_g \propto e_{f \cap g}$.  
  \end{proof}
  
\textbf{For the remainder of this paper, we say \emph{oblivious} to mean \emph{matrix-oblivious}.}  

Let us suppose now that $\mathbf R$ is oblivious.   From $\mathcal A$, one can construct an enlarged set of events 
$$
\oa = \{ \langle C \rangle \mid \text{$C$ a stable set of $\mathcal A$} \}.
$$
We define the relation $\sim$ on $\oa$ by setting $\langle C \rangle \sim \langle C' \rangle$ iff $C \sim C'$. For each event $f = \langle C \rangle \in \oa$, we define a corresponding resampling oracle $\obr_f$ as follows. Choose some arbitrary enumeration $C = \{ f_1, \dots, f_t \}$. Given a state $\sigma_1$, apply $\mathbf R_{f_1}$ repeatedly until the resulting state $\sigma_2$ is in $f_2 \cap \dots \cap f_t$ (i.e. via rejection sampling). Then, again apply $\mathbf R_{f_2}$ repeatedly until the state is in $f_3 \cap \dots \cap f_t$, and so on.

 The intent is to choose the flaw set $\mathcal F$ to be some arbitrary subset of $\oa$;  as before, $\oa$ does not necessarily satisfy any LLLL convergence criterion.

It would seem reasonable that if $\mathbf R$ is commutative, then $\obr$ should be as well.  We will indeed show this for matrix commutativity. By contrast, it does not seem to hold for strong commutativity. This is a good illustration of how the new commutativity definition is easier to work with, beyond its advantage of greater generality.

\begin{proposition}
\label{obl-p00}
Suppose $\mathbf R$ is oblivious and regenerating, but not necessarily commutative. Then for a stable set $C$ and atom $f \not \sim C$ there holds $A_f e_{\langle C \rangle} =\frac{\mu(f) \mu(\langle C \rangle)}{\mu(f \cap \langle C \rangle)} e_{f \cap \langle C \rangle}$.
\end{proposition}
\begin{proof}
By hypothesis, we have $A_f e_{\langle C \rangle} = p \cdot e_{f \cap \langle C \rangle}$ for scalar $p$. Since $\mathbf R$ is regenerating, we have on the one hand $\mu^{\top} A_f e_{\langle C \rangle} = \mu(f) \mu^{\top} e_{\langle C \rangle} = \mu(f) \mu(\langle C \rangle)$, and on the other hand $\mu^{\top} e_{f \cap \langle C \rangle} = \mu(f \cap \langle C \rangle)$.
\end{proof}
  
\begin{proposition}
  \label{obl-p1}
  Suppose $\mathbf R$ is oblivious but not necessarily commutative. For $f = \langle C \rangle$ in $\oa$, suppose that we have fixed an enumeration $C = \{f_1, \dots, f_t \}$ to define $\obr_f$.    Then $
  A_f \propto A_{f_1} \cdots A_{f_t}$. If $\mathbf R$ is regenerating, then in particular $ A_f = \frac{\mu(f) }{\mu(f_1) \cdots \mu(f_t)}  A_{f_1} \cdots A_{f_t}$.
\end{proposition}
\begin{proof} We show it by induction on $t$. The case $t = 1$ is immediate.   For $t > 1$, let $f' = \langle \{ f_2, \dots, f_t \} \rangle$. We can view $\obr_f(\sigma)$ recursively:  first resample $f_1$, conditional on the resulting state $\sigma'$ being in $f'$; then apply $\mathbf R_{f'}$. So $ e_{\sigma}^{\top} A_f = \frac{ e_{\sigma}^{\top} A_{f_1} A_{f'} }{e_{\sigma}^{\top} A_{f_1} e_{f'}}$. By \Cref{obl-p0} we have $e_{\sigma}^{\top} A_{f_1} e_{f'} \propto e_{\sigma}^{\top} e_{f_1 \cap f'} = 1$ since $\sigma \in f_1 \cap f' = f$.  Overall, we have $A_f \propto A_{f_1} A_{f'}$, and the first result then follows immediately from induction.

The proof for the case where $\mathbf R$ is regenerating is completely analogous, where we use \Cref{obl-p00} to determine $e_{\sigma}^{\top} A_{f_1} e_{f'} = \mu(f_1) \mu(f') / \mu(f)$ for $\sigma \in f$.
\end{proof}

\begin{theorem}
  Suppose  $\mathbf R$  is oblivious, but not necessarily commutative. 
  \begin{itemize}
  \item $\obr$ with dependence relation $\sim$ is an oblivious resampling oracle for $\oa$.
  \item If $\mathbf R$  is regenerating on $\mathcal A$, then $\obr$ is regenerating on $\oa$.
  \end{itemize}
\end{theorem}
\begin{proof}
Consider $f = \langle \{ f_1, \dots, f_t \} \rangle \in \oa$ and a stable set $C = \{ \langle C_1 \rangle, \dots, \langle C_k \rangle \}$ of $\oa$ with $f \not \sim C$. Let $g = \langle C \rangle$; note that also $g = \langle C_1 \cup \dots \cup C_k \rangle$, where $C_1 \cup \dots \cup C_k$ is a stable set of $\mathcal A$.  By \Cref{obl-p1}, we have $A_f e_g \propto A_{f_1} \cdots A_{f_t} e_g$. By matrix-obliviousness of $f_t$, this is proportional to $A_{f_1} \dots A_{f_{t-1}} e_{f_t \cap g}$. Repeatedly again applying the definition of matrix-oblivious to atoms $f_{t-1}, \dots, f_1$ gives us us $A_f e_g \propto e_{f_1 \cap \dots \cap f_t  \cap g} = e_{f \cap g}$ as required for the first claim. Similarly, if $\mathbf R$ is regenerating, then \Cref{obl-p1} gives $
\mu^{\top} A_g = \frac{ \mu^{\top} \mu(g) }{\mu(f_1) \cdots \mu(f_t)} A_{f_1} \cdots A_{f_t} = \mu(g) \mu^{\top}
$
since $\mu^{\top} A_{f_i} = \mu(f_i)$ for each $i$.
\end{proof}
  
  \begin{theorem}
  \label{obl-p2}
  Suppose  $\mathbf R$ is commutative and oblivious. Then the following properties hold for $\obr$:
  \begin{enumerate}
  \item For a stable set $I = \{ \langle C_1 \rangle, \dots, \langle C_k \rangle \}$ in $\oa$, the multiset union $J = C_1 \uplus C_2 \uplus \dots \uplus C_k$ (i.e. the number of copies of $f$ is the sum of those in $C_1, \dots, C_k)$ is a stable multiset of $\mathcal A$, with $A_I \propto A_J$. 
  \item The matrix $A_f$ for $f = \langle C \rangle$  in $\oa$ does not depend on the chosen enumeration $C = \{ f_1, \dots, f_t \}$.  
  \item $\overline{\mathbf R}$ is commutative on $\oa$.
  \end{enumerate}
\end{theorem}
\begin{proof}
\begin{enumerate}
\item  Since $I$ is stable in $\oa$, we have $C_i \not \sim C_j$ for all pairs $i,j$. So $J$ is a stable multiset. Furthermore, by \Cref{obl-p1}, we have we have $A_J = \prod_{i=1}^k A_{\langle C_i \rangle} \propto  \prod_{i=1}^k \prod_{f \in C_i} A_f = \prod_{f \in J} A_f = A_J$.
\item  By \Cref{obl-p1}, we have $A_f = c A_{f_1} \cdots A_{f_t}$ for a scalar $c$. Since the matrices $A_{f_i}$ all commute, the RHS does not depend on the enumeration of $C$. Furthermore,  $c$ can be determined from $A_{f_1} \cdots A_{f_t}$ by choosing an arbitrary state $\sigma \in f$ and setting $c = \frac{1}{ e_{\sigma}^{\top} A_{f_1} \cdots A_{f_t} \vec 1}$.
\item  Let $g = \langle C \rangle, g' = \langle C' \rangle$ with $g \not \sim g'$. So $f \not \sim f'$ for all $f \in C, f' \in C'$.  By \Cref{obl-p1} we have
  $$
  A_g A_{g'} = c_g c_{g'} \Bigl( \prod_{f \in C} A_{f} \prod_{f' \in C'} A_{f'} \Bigr), \qquad \qquad  A_{g'} A_g = c_{g'} c_g  \Bigl(  \prod_{f' \in C'} A_{f'} \prod_{f \in C} A_{f} \Bigr)
  $$
  for scalar constants $c_g, c_{g'}$. All these matrices $A_f, A_{f'}$  commute, so both quantities are equal. \qedhere
  \end{enumerate}
\end{proof}

As an example of an LLL construction using atomic events, recall the latin transversal application. Here, we have an $n \times n$ colored array $C$, where each color appears at most $\Delta$ times. We seek a permutation $\pi$ where all colors $C(i, \pi)$ are distinct.  
  
\begin{proposition}
  \label{latin-prop}
  Suppose that $\Delta = \frac{27}{256} n$. Then the expected number of steps of the Search Algorithm is $O(n)$. Furthermore,  we have $\Psi(f) \leq \frac{256}{81 n^2}$ for each flaw $f$.
\end{proposition}
\begin{proof}
 We start with the atomic set $\mathcal A$ for the permutation setting, and then build the set of flaws $\mathcal F$ as a subset of $\oa$. For each pair of cells $(x_1, y_1), (x_2, y_2)$ of the same color, we have a flaw $f = \langle \{ f_{x_1 y_1}, f_{x_2 y_2} \} \rangle$, i.e. that $\pi x_1 = y_1 \wedge \pi x_2 = y_2$.
 
We apply the cluster-expansion criterion with $\eta(f) = \frac{256}{81 n^2}$ for each flaw $f$ and define the $\bowtie$ relation by setting $f \bowtie g$ if $f, g$ overlap in a row or column. This indeed extends $\simeq$, which has $f \sim g$ if they overlap on a coordinate and \emph{disagree} on the corresponding other coordinate.

Consider now a flaw $f$ corresponding to cells $(x_1, y_1), (x_2, y_2)$, and a corresponding set $I$ of $\bowtie$-neighbors. At most one flaw $g \in I$ can overlap with column $x_1$ (any two such elements $g_1, g_2$ would have $g_1 \bowtie g_2$); given $x_1' = x_1$, there are $n$ choices for $y_1'$, then given the pair $(x_1', y_1')$, there are at most $\Delta - 1$ other cells with the same color. Similar arguments apply to elements in $I$ overlapping the other rows and columns. Overall, the sum of $\prod_{g \in I} \eta(g)$ over all such sets $I$ is at most $(1 + n (\Delta-1)  \frac{256}{81 n^2})^4$. So we need to show that
$$
\frac{256}{81 n^2} \geq \frac{1}{n^2} \cdot \bigl( 1 + n (\Delta-1)  \frac{256}{81 n^2} \bigr)^4
$$
which is a routine calculation for $n \geq 2$. 

The total number of flaws is at most $n^2 (\Delta-1) / 2 = O(n^3)$. So $W \leq |\mathcal F| \cdot \frac{256}{81 n^2} \leq O(n)$.
\end{proof}
  
\section{More detailed distributional bounds}
\label{sec:detailedist}
As before, consider an event $E$ in $\Omega$, and let $P(E)$ be the probability that $E$ holds in the output of the Search Algorithm.  We will develop tighter bounds on $P(E)$, via a more refined construction of wdags. Namely, suppose that $E$ holds at some time $t$. We build a corresponding wdag $J_t$ by initializing $J_t = \emptyset$ and then, for each time $s = t-1, \dots, 0$ with resampled flaw $f_s$, updating $J_t$ as follows:
\begin{itemize}
\item If $A_{f_s} A_{J_t} e_E \not \preceq A_{J_t} e_E$,  or if $J_t$ has a source node labeled $f_s$, then prepend $f_s$ to $J_t$
\item Otherwise, do not modify $J_t$.
\end{itemize}

We say that wdag $H$ \emph{appears for $E$} if $J_t$ is isomorphic to $H$ for any time $t \geq 0$.  (This overrides the definition in \Cref{distrib-sec}.)

\begin{lemma}
\label{jtappear}
Any given wdag $H$ appears for $E$ with probability at most $\mu^{\top} A_H e_E$.
\end{lemma}
\begin{proof}
As in \Cref{witness-tree-lemma_simple}, it suffices to show that if the Search Algorithm runs for at most $t_{\max}$ steps starting with state $\sigma$, where $t_{\max}$ is an arbitrary integer, then $H$ appears for $E$ with probability at most $e_{\sigma}^{\top} A_H e_E.$ We prove this claim by induction on $t_{\max}$. 

If $H = \emptyset$ and $\sigma \in E$, then $e_{\sigma}^{\top} A_H e_E = 1$ and the bound holds vacuously. This is the only way that $H$ can appear if $t_{\max} = 0$. So, for the induction step, suppose that $t_{\max} > 0$ and either $H \neq \emptyset$ or $\sigma \not \in E$. Then $J_0 \neq H$, and the only way for $H$ to appear for $E$ is to have $J_t = H$ for some $t \in \{1, \dots, t_{\max} \}$. Now suppose $\mathbf S$ selects a flaw $f$  in $\sigma$. We view the evolution of the Search Algorithm $\mathbf A$ as a two-part process: we first resample $f$, reaching state $\sigma'$ with probability $A_{f}[ \sigma, \sigma']$. We then execute a new search algorithm $\mathbf A'$ starting at state $\sigma'$. 

So in order for $H$ to appear, one of the following two mutually exclusive conditions must hold:  (i) $H$ has a unique source node $v$ labeled $f$ and $J'_{t-1} = H - v$; or  (ii) $H$ has no such node and $J'_{t-1} = J_t = H$ and $A_f A_H e_E \preceq A_H e_E$.

In case (i), $H - v$ would appear for $E$ in search algorithm $\mathbf A'$ within $t_{\max} - 1$ timesteps. By induction hypothesis, this has  probability at most $e_{\sigma'}^{\top} A_H e_E$ for fixed $\sigma'$. Summing over $\sigma'$ gives a total probability of 
$$
\sum_{\sigma'} A_f[\sigma, \sigma'] e_{\sigma'}^{\top} A_{H-v} e_E = e_{\sigma}^{\top} A_f  A_{H-v} e_E = e_{\sigma}^{\top} A_H e_E
$$ 

In case (ii), $H $ would appear for $ E$ in search algorithm $\mathbf A'$ within $t_{\max} - 1$ timesteps. By induction hypothesis, this has  probability at most $e_{\sigma'}^{\top} A_H e_E$ for fixed $\sigma'$. Summing over $\sigma'$ gives a total probability of 
$$
\sum_{\sigma'} A_f[\sigma, \sigma'] e_{\sigma'}^{\top} A_{H} e_E = e_{\sigma}^{\top} A_f  A_{H} e_E
$$

Since $A_f A_H \preceq A_H$, this is at most $e_{\sigma}^{\top} A_H e_E$, again completing the induction.
\end{proof}

Let us say that a time $t$ is \emph{good for $E$} if $E$ is true at time $t$, and either (i) $t = 0$ (i.e. the initial sampling of the variables) or (ii) $t > 0$ and $E$ is false at time $t-1$.   For a subset $X \subseteq \Omega$, we say that $t$ is \emph{good for $X,E$} if $t$ is good for $E$ and $X$ is true at time $0, \dots, t-1$.  Define $N(X,E)$ to be the expected number of times $t$ that are good for $X,E$. Correspondingly, define $\mathfrak J[X,E]$ to be the collection of wdags which can appear for $E$ at such times $t$. We have the following main characterization:
\begin{proposition}
\label{N-prop}
There holds $N(X, E) \leq \sum_{H \in \mathfrak J[X,E]} \mu^{\top} A_H e_E$.
\end{proposition}
\begin{proof}
We claim that, if $E$ is true at times $t_1, t_3$ and false at time $t_2$, where $t_1 < t_2 < t_3$, then $J_{t_1} \neq J_{t_3}$.  We show this by induction on $t_1$. First suppose $t_1 = 0$.  Then $J_{t_1} = \emptyset$.  Since $E$ is false in $\mathbf A$ at time $t_2$, it must become true due to resampling a flaw $g$ at some time $t' \geq t_2 $. Clearly $g \in \check \Gamma(E)$, i.e. $A_g e_E \not \preceq e_E$. So either $J_{t_3}$ is non-empty at time $t'$, or the rule for forming $J_{t_3 }$ at time $t'$ would add a node labeled $g$ to the empty wdag. In either case we have $J_{t_3} \neq \emptyset = J_{t_1}$.

Next, suppose $t_1 > 0$ and $J_{t_1} = J_{t_3}$. Let $f_0$ be the flaw resampled at time $0$, and consider the search algorithm $\mathbf A'$ starting at time $1$, with corresponding wdags $J'_{t_1-1}$ and $J'_{t_3-1}$ where $E$ was false at time $t_2-1$. By induction hypothesis we have $J'_{t_1 - 1} \neq J'_{t_3 - 1}$. The only way to get $J_{t_1} = J_{t_3}$ is if we prepend $f_0$ to exactly one of $J'_{t_1-1}$ or $J'_{t_3-1}$. Say it is the former (the two cases are completely symmetric). So $J_{t_1} = J_{t_3} = J'_{t_3-1}$ would have a source node labeled $f_0$. But our rule for forming wdags would then also prepend $f_0$ to $J'_{t_3-1}$ to get $J_{t_3}$, a contradiction. 

Thus, for each time $t$ that is good for $X,E$, some distinct wdag $J_t$ appears for $E$. By \Cref{jtappear}, the expected number of such times $t$ is at most $ \sum_{H \in \mathfrak J[X,E]}   \mu^{\top} A_H e_E$.
\end{proof}

\begin{corollary}
\label{distthm1d}
There holds $P(E) \leq \mu(E) \sum_{H \in \mathfrak J[ \Omega \setminus E, E]} w(H)$.
\end{corollary}
\begin{proof}
Consider the first time $t$ that $E$ becomes true, if any.  Then $E$ is false at times $0, \dots, t-1$ and so $t$ is good for $\Omega \setminus E, E$. Hence, the expected number of such times is at most $N(\Omega \setminus E, E)$. This is at most $\sum_{H \in \mathfrak J[ \Omega \setminus E, E]} \mu^{\top} A_H e_E$, which is at most $\mu(E) \sum_{H \in \mathfrak J[ \Omega \setminus E, E]} w(H)$ via \Cref{a1}.
\end{proof}

To get improved bounds of $P(E)$ via \Cref{distthm1d}, we need to analyze the wdags in $\mathfrak J[X, E]$ for the given event $E$ and for $X = \Omega \setminus E$. As a starting point, we observe two simple properties of such wdags.
\begin{observation}
\label{character-prop0}
For a wdag $H \in \mathfrak J[X,E]$, we have $\sink(H) \subseteq \check \Gamma(E)$. Furthermore, for each node $v \in H$, we have $L(v) \cap X \neq \emptyset$.
\end{observation}
\begin{proof}
For the first observation, suppose that we are building the wdag $J_t$ for a given time $t$ which is good for $X,E$, and suppose we add a sink node $v$ with label $f \notin \check \Gamma(E)$ at time $s < t$. Let $H$ denote the partial wdag before adding this node. Thus $f \not \sim H$, so $A_f A_H e_E = A_H A_f e_E$. Since $f \notin \check \Gamma(E)$, we have $A_f e_E \preceq e_E$. So $A_f A_H e_E \preceq A_H e_E$ and we would not add node $v$ to $J_t$.

For the second observation, note that if $t$ is good for $X, E$,  then by definition $X$ holds at all previous times. Since the nodes of $J_t$ are labeled by previously seen flaws, we have $L(v) \cap  X \neq \emptyset$ for any $v \in J_t$.
\end{proof}

In light of \Cref{character-prop0},  \Cref{distthm1d} gives bounds on $P(E)$ which are at least as tight as the simpler bound of \Cref{crispdist}.  For some probability spaces, additional restrictions on the structure of $\mathfrak J[X,E]$, and tighter bounds of $P(E)$, can be available. These can be quite subtle and hard to analyze. The following characterization captures some (if not all) of these restrictions.  

\begin{proposition}
\label{character-prop}
For any wdag $H \in \mathfrak J[X,E]$, there is an enumeration $\sink(H) = \{f_1, \dots, f_ k \}$ such that
\begin{equation}
\label{ord-eqn}
\forall i = 1, \dots, k \qquad A_{f_i} A_{f_{i+1}} \dots A_{f_k} e_E \not \preceq A_{f_{i+1}} \dots A_{f_k} e_E
\end{equation}
\end{proposition}
\begin{proof}
Suppose the sink nodes of $J_t$ are added at increasing times $s_1, \dots, s_k$, and let $f_i$ denote the flaw resampled at each time $s_i$. We claim that this enumeration satisfies the bound of Eq.~(\ref{ord-eqn}).  For, let $H$ denote the partial wdag $J_t$ produced immediately before prepending some $f_i$. Since $f_i$ should be the label of a sink node, we have $f_i \not \sim H$ and $\sink(H) = \{f_{i+1}, \dots, f_k \}$. We can write $A_H = A_{H'} A_{f_{i+1}} \cdots A_{f_k}$ where $H'$ are all the non-sink nodes of $H$. By our rule for forming $J_t$, we must $A_{f_i} A_H e_E \not \preceq A_H e_E$. Using the fact that $A_{H'}$ and $A_{f_i}$ commute, this implies that $ A_{H'} A_{f_i} A_{f_{i+1}} \cdots A_{f_k} e_E  \not \preceq A_{H'} A_{f_{i+1}} \cdots A_{f_k} e_E$, which further implies that $A_{f_i} A_{f_{i+1}} \cdots A_{f_k} e_E  \not \preceq  A_{f_{i+1}} \cdots A_{f_k} e_E$ as claimed.
\end{proof}

To illustrate how this characterization can give significantly stronger bounds than \Cref{crispdist} or known nonconstructive LLL estimates, we show the following result:
\begin{theorem}
\label{perm-thm}
In  the variable, permutation, or clique-perfect-matching settings, let $C$ be a stable set of atomic events, and define event $E  = \langle C \rangle$. For each $H \in \mathfrak J[X,E]$, there is an injective function $\phi_H: \sink(H) \rightarrow C$ with $f \sim \phi_H(f)$ for all $f \in \sink(H)$.
\end{theorem}

The work \cite{LLLLBeyond} showed a much stronger version of \Cref{perm-thm} for the variable setting.  The work \cite{NewBoundsHarris} showed (what is essentially) \Cref{perm-thm} for the permutation setting using a complicated and ad-hoc analysis based on a variant of witness trees. The bound for the clique-perfect-matching setting is new. We obtain all these results in a more unified way; the proofs are deferred to Appendix~\ref{perm-app}.

\begin{corollary}
\label{perm-thmaa}
In the setting of \Cref{perm-thm}, we have $$
N(X,E) \leq \mu(E) \prod_{g \in C} \bigl(1 + \sum_{f: f \sim g} \Psi_{\mathcal G}(f) \bigr) \qquad \text{ for $\mathcal G = \{f  \in \mathcal F: f \cap X \neq \emptyset \}$}.
$$ 
\end{corollary}
\begin{proof}
First, by \Cref{character-prop0},  the nodes of any wdag $H \in \mathfrak J[X,E]$ are labeled from $\mathcal G$.  Now let $C = \{g_1, \dots, g_k \}$. To enumerate such $H$, we build the set $I = \sink(H)$ by choosing, for each $i = 1, \dots, k$, a preimage set $I_{g_i} = \phi^{-1}_H(g_i)$ of cardinality at most one.  Given the choice of $I$, the remaining sum  over wdags with $\sink(H) = I$ is at most $\Psi_{\mathcal G}(I)$. Overall, this gives the bound:
$$
\sum_{I} \Psi_{\mathcal G}(I) \leq \sum_{I_{g_1}, \dots, I_{g_k} } \Psi_{\mathcal G} ( I_{g_1} \cup \dots \cup I_{g_k}) \leq  \sum_{I_{g_1}, \dots, I_{g_k} } \Psi_{\mathcal G}( I_{g_1} ) \cdots \Psi_{\mathcal G}(I_{g_k})
$$ 
where the last inequality follows from log-subadditivity of $\Psi$. This can be written as $\prod_{g \in C} \sum_{I_g} \Psi_{\mathcal G}(I_{g})$. The case of $I_{g} = \emptyset$ contributes $1$, and the case of $I_{g} = \{f \}$ contributes $\Psi_{\mathcal G}(f)$. 
\end{proof}
 
 We believe that tighter bounds on $N(X,E)$ are possible, using a more careful analysis of the structure of wdags in $\mathfrak J[X,E]$. Such bounds would lead to small improvements in the numerical constants for the later \Cref{perm-cor1}. Since the analysis needed to obtain \Cref{perm-thm} is already difficult, we leave this as an open problem.

\subsection{A distributional bound with partial dependence}
One weakness of the distributional bounds in \Cref{distrib-sec} is that the definition of $\check \Gamma(E)$ is binary: either flaw $f$ cannot possibly cause $E$, or every occurrence of $f$ must be tracked to determine if it caused $E$. The next results allow us to take account of flaws which can ``partially'' cause $E$.

For flaw $f$ and event $E$, define $$
\kappa(f, E) = \max_{\sigma \in f \setminus E} \frac{e_{\sigma}^{\top} A_f e_E}{e_{\sigma}^{\top} A_f e_{E \cup (\Omega \setminus f)}}
$$

Note that $\kappa(f, E) \in [0,1]$, and $\kappa(f,E) = 0$ for $f \notin \check \Gamma(E)$. Thus, $\kappa(f, E)$ is a weighted measure of the extent to which $f$ causes $E$.

\begin{theorem}
\label{distthm1e} 
$P(E) \leq \mu(E) +  \sum_{f \in \mathcal F}  \kappa(f, E) \cdot N (\Omega \setminus E, f \setminus E)$.
\end{theorem}
\begin{proof}
Let us say that a pair $(f,t)$ is \emph{open} if the following three conditions hold: (i) $E$ is false at times $0, \dots, t$; (ii) $f$ is resampled at time $t$; and (iii) either this is the first resampling of $f$, or if the most recent resampling of $f$ had occured at time $t' < t$, then $f$ had been false at some intermediate time between $t'$ and $t$.  We say that a triple $(f,t,s)$ with $s \geq t$ is \emph{closed} if $(f,t)$ is an open pair, and furthermore the following four conditions  all hold: (i) $f$ is resampled at time $s$; (ii) $E$ has been false at all times up to $s$; (iii) the resampling at time $s$ make $E$ true; (iv) $f$ is true at times $t, \dots, s$. Note that it is possible to have $s = t$.

We claim that if $E$ becomes true after time 0, then there is some closed triple $(f,t,s)$. For, suppose that $E$ first becomes true due to resampling flaw $f$ at time $s$. Going backward, let $t \leq s$ be the earliest time such that the same flaw $f$ is resampled at time $t$ and $f$ remained true between times $t$ and $s$. Then $(f,t)$ is open and $(f,t,s)$ is closed.

We next claim that, for a given pair $(f,t)$, the probability that there exists any closed triple $(f,t,s)$, conditional on that $(f,t)$ is open, is at most $\kappa(f,E)$. For, let $E' = E \cup (\Omega \setminus f)$, and we condition on the event that $s \geq t$ is the first time where  $f$ is resampled and $E'$ becomes true. Let $\sigma \in f \setminus E$ be the state at time $s$. The probability that $E$ holds after resampling $f$, conditional on the state $\sigma$ and that $E'$ becomes true at that resampling,  is $\frac{ e_{\sigma}^{\top} A_f e_{E}  }{e_{\sigma}^{\top} A_f e_{E'}}$, which is at most $\kappa(f,E)$ by definition.

Accordingly, the overall probability of  $E$ is at most $\mu(E) + \sum_f L_f \cdot \kappa(f, E)$, where $L_f$ is the expected number of open pairs $(f,t)$, and the first term accounts for the possibility that $E$ holds at time 0. It remains to bound $L_f$. Let us fix some flaw $f$. We claim that, for each open pair $(f,t)$, there is some distinct time $s_t$ which is good for $\Omega \setminus E,  f \setminus E$. For, going backward, find the earliest time $s_t$ such that $f$ was true from times $s_t, \dots, t$.  If $s_t > 0$, then at time $s_t - 1$ the state transits from $\Omega \setminus f$ to $f \setminus E$, which is also a transition from $\Omega \setminus (f \setminus E)$ to $f \setminus E$. Furthermore, for distinct open pairs $(f,t), (f,t')$ we have $s_t \neq s_{t'}$. So  $L_f \leq N(\Omega \setminus E, f \setminus E)$.
\end{proof}

\begin{corollary}
\label{distthm1f}
$P(E) \geq \mu(E) -  \sum_{f \in \mathcal F} \kappa(f, \Omega \setminus E) \cdot N(E, f \cap E )$.
\end{corollary}
\begin{proof}
Apply \Cref{distthm1e} to the event $\Omega \setminus E$.
\end{proof}

For certain atomically-generated resampling spaces, generic bounds on $\kappa$ are available.

\begin{proposition}
\label{kappabndgf2}
Let $\mathcal A$ be a commutative, oblivious, regenerating resampling space, which satisfies the following additional ``strong obliviousness'' property:
$$
A_f e_g = \mu(f) e_g \qquad \text{ for all atoms $f,g \in \mathcal A$ with $f \simeq g$.}
$$
(This property  holds, for example, for the variable, permutation, and clique-perfect-matching settings.)

\medskip

Then for any  event $f \in \oa$ and atom $g \in \mathcal A$ there holds
$$
\kappa(f,g) \leq \frac{\mu(g)}{1 -  \mu(f)}.
$$

 Moreover, if $f \not \sim g$, we have
$$
\kappa(f,\Omega \setminus g) \leq \frac{1 - \frac{\mu(f) \mu(g)}{\mu(f \cap g)}}{1 -  \mu(f)}
$$ 
\end{proposition}
\begin{proof}
Let $E = g$ and $E = \Omega \setminus g$ in the two cases respectively.  In either case, we can estimate the denominator in $\kappa$ as $e_{\sigma}^{\top} A_f e_{E \cup (\Omega \setminus f)} \geq  e_{\sigma}^{\top} A_f e_{\Omega \setminus f} = 1 - e_{\sigma}^{\top} A_f e_f$. We claim that, for any such event $f = \langle \{f_1, \dots, f_k \} \rangle$ and state $\sigma \in f$, there holds
\begin{equation}
\label{yyw1}
e_{\sigma}^{\top} A_f e_f \leq \mu(f)
\end{equation}

We show Eq.~(\ref{yyw1}) by induction on $k$. The case $k = 1$ holds by hypothesis.  For $k>1$, let $f' = \langle\{ f_2, \dots, f_k \} \rangle$, and let $\sigma \in f$. By \Cref{obl-p1} and some manipulations  can write:
\begin{align*}
e_{\sigma}^{\top} A_f e_f &=  \frac{\mu(f)}{\mu(f') \mu(f_1)} e_{\sigma}^{\top} A_{f_1} A_{f'}  e_{f}   \leq \frac{\mu(f)}{\mu(f') \mu(f_1)} \sum_{\sigma' \in f_1} e_{\sigma}^{\top} A_{f_1} e_{\sigma'} \cdot  e_{\sigma'}^{\top} A_{f'} e_{f'}  \\
& \leq \frac{\mu(f)}{\mu(f') \mu(f_1)} \sum_{\sigma' \in f_1} e_{\sigma}^{\top} A_{f_1} e_{\sigma'} \mu(f') =\frac{\mu(f)}{\mu(f_1)}   e_{\sigma}^{\top} A_{f_1}e_{f_1}.  & \text{(induction hypothesis)}
\end{align*}
By hypothesis, is equal to $\mu(f)$.

Thus, the denominator in $\kappa$ (in either case) is at least $1 - \mu(f)$. We turn to estimating the numerator. For $\kappa(f,g)$, we claim that $e_{\sigma} A_f e_g = \mu(g)$ for any state $\sigma \in f$. For, consider $f = \langle \{ f_1, \dots, f_k \} \rangle$; by \Cref{obl-p2}, any reordering of $f_1, \dots, f_k$ would give the same matrix $A_f$.  So assume without loss of generality that $f_k \sim g$. When implementing $\mathbf R_f$, suppose we condition on the state $\sigma'$ just before resampling $f_k$. By hypothesis, the probability that $\sigma'$ gets mapped to $g$ is precisely $\mu(g)$.

For $\kappa(f, \Omega \setminus g)$ with $g \not \sim f$, \Cref{obl-p00} gives \[
e_{\sigma}^{\top} A_f e_E = 1 - e_{\sigma}^{\top} A_f e_g = 1 - e_{\sigma}^{\top} \frac{\mu(f) \mu(g)}{\mu(f \cap g)} e_{f \cap g} = 1 -\frac{\mu(f) \mu(g)}{\mu(f \cap g)}. \qedhere \]
\end{proof}

\subsection{Applications}  Using the more sophisticated distributional bounds we can show the following bounds for the permutation and perfect-matching probability spaces.
\begin{theorem}
\label{perm-cor1}
\begin{enumerate}
\item If each color appears at most $\Delta = \frac{27}{256} n$ times in the array, then the Search Algorithm generates a latin transversal where, for each cell $x,y$, there holds
$$
\frac{17}{32 n} \leq P( \pi x = y ) \leq \frac{203}{128 n}
$$
\item Consider an edge-coloring $C$ of the clique $K_n$, for $n$ even, such that each color appears on at most $\Delta = \frac{27}{256} n$ edges. Then the Search Algorithm generates a perfect matching $M$ such that $C(e) \neq C(e')$  for all distinct edges $e, e'$ of $M$.  Moreover each edge $e$ has
$$
\frac{17}{32 (n-1)} \leq P( e \in M) \leq \frac{203}{128(n-1)}.
$$
\end{enumerate}
\end{theorem}
\begin{proof}
  We show only the result on permutations; the result for the clique is completely analogous.

Let $g$ be the atomic event $g$ that $\pi x = y$.   Note that any time that is good for $\Omega \setminus g, f \setminus g$ is good for $\Omega, f$ as well, so $N( \Omega \setminus g, f \setminus g) \leq N(\Omega,  f)$. Thus, for the upper bound, \Cref{distthm1e} gives:
  $$
  P( \pi x = y) \leq \mu(g) + \sum_{f \in \mathcal F} N(\Omega, f) \kappa(f, g)
  $$
  
  Now consider a flaw $f \sim g$ defined by atoms $f_1, f_2$. By \Cref{kappabndgf2}, we have
  $$
  \kappa(f,g) \leq \frac{1/n}{1 - \frac{1}{n(n-1)}}
   $$
  
  By  \Cref{perm-thmaa}, we have $N( \Omega, f) \leq \frac{ (n-2)! }{n!} \prod_{i=1}^2 \bigl(1 + \sum_{f' \in \mathcal F: f' \sim f_i} \Psi(f') \bigr)$. Since $\Psi(f') \leq \gamma := \frac{256}{81 n^2}$, and for each $i = 1,2$ there are at most $2 n (\Delta-1)$ choices for $f'$, we overall get $ N(\Omega, f) \leq \frac{ ( 1 + 2 n (\Delta-1) \gamma)^2}{n(n-1)}$. There are at most $2 n (\Delta - 1)$ choices for $f$, so we have
 $$
  P(g) \leq \frac{1}{n}  + 2 n (\Delta-1) \cdot \frac{( 1 + 2 n (\Delta-1) \gamma \bigr)^2}{n(n-1)} \cdot \frac{ 1/n}{1 - \frac{1}{n(n-1)}}
  $$
  and routine calculations show that this is at most $203/(128 n)$.

For the lower bound, we use \Cref{distthm1f} to get:
$$
P(g) \geq \mu(g) - \sum_{f \in \mathcal F} N( g, f \cap g)  \kappa(f, \Omega \setminus g)
  $$
  
We will bound $N( g, f \cap g)$ via \Cref{perm-thm}. Let  $\mathcal G = \{f \in \mathcal F: f \cap g \neq \emptyset \}$. Consider some such flaw $f$ corresponding to atoms $f_1, f_2$. First, suppose $f_1, f_2, g$ are distinct.  Then   \Cref{perm-thmaa} gives $$
N(  g, f \cap g) \leq \frac{1}{n(n-1)(n-2)}  \bigl( 1 + \sum_{f': f' \sim g} \Psi_{\mathcal G}(f') \bigr) \prod_{i=1}^2 \bigl(1 + \sum_{f': f' \sim f_i} \Psi_{\mathcal G}(f') \bigr)
$$

We can estimate $ \sum_{f': f' \sim f_i} \Psi_{\mathcal G}(f') \leq \sum_{f': f' \sim f_i} \Psi_{\mathcal F}(f')  \leq 2 n (\Delta - 1) \gamma$ for each $i = 1,2$. Also, $\mathcal G$ does not contain any flaws $f$ with $f \sim g$, so the term $\sum_{f': f' \sim g} \Psi_{\mathcal G}(f')$ contributes nothing. Overall we have $N(  g, f \cap g) \leq \frac{ (1 + 2 n (\Delta - 1) \gamma)^2}{n(n-1)(n-2)}$.  There are at most $\frac{n^2 (\Delta-1)}{2}$ flaws of this kind. By \Cref{kappabndgf2}, each such flaw $f$ has $\kappa(f, \Omega \setminus g) \leq \frac{1 - \frac{(1/n(n-1)) (1/n)}{1/n(n-1)(n-2)}}{1 - 1/n(n-1)} = \frac{2/n}{1-1/n(n-1)}$.

Next, suppose that say $f_1 = g$. Then, by the same reasoning as above, \Cref{perm-thmaa}  gives $N( g, f \cap g) = N( g, f) \leq \frac{(1 + 2 n (\Delta - 1) \gamma)}{n(n-1)}$. There are at most $(\Delta-1)$ flaws of this kind, and each trivially has $\kappa(f, \Omega \setminus g) \leq 1$.   

    Putting all terms together, we have
 \begin{align*}
  &P(g)  \geq 1/n  -  n^2 (\Delta-1)/2  \cdot \frac{ (1 + 2 n (\Delta - 1) \gamma)^2}{n(n-1)(n-2)} \cdot \frac{ 2/n }{1 - \frac{1}{n(n-1)} } -  (\Delta-1) \frac{ (1 + 2 n (\Delta - 1) \gamma)}{n(n-1)} \cdot 1 \thickspace
  \end{align*}
  which is easily seen to be at least $\frac{17}{32 n}$. 
  \end{proof}

  Note that \Cref{crispdist} would yield a weaker bound $P( \pi x = y) \leq \frac{16}{9 n}$, and \Cref{perm-thm} directly would yield a weaker bound $P( \pi x = y) \leq \frac{5}{3 n}$. It is not known if a constant term better than $16/9$ can be shown for the LLL-distribution.

\appendix

\section{Proof of \Cref{perm-thm}}
\label{perm-app}
For brevity throughout, for a stable set $I$ of $\mathcal A$, we write $e_I$ as shorthand for $e_{\langle I \rangle}$.

 As an easy warm-up exercise, we consider the variable setting. To recall and set notation, $\Omega$ is the cartesian product distribution on tuples $X = (X_1, \dots, X_n)$. For each index $i = 1, \dots, n$ and value $y$, there is atomic event $X_i = y$; for brevity in this section, we denote this atom by $[i, y]$. Its resampling oracle draws $X_i$ again from its original distribution. We have $[i,y] \sim [i', y']$ if $i = i'$ and $y \neq y'$.

\begin{proposition}
\label{add-1-prop_00}
For any $\mathcal A$-stable multiset $I$, and corresponding set $\bar I$ (i.e. where we keep at most one copy of each element in $I$), there holds $A_I e_C \propto e_{D}$ for stable set $D = \bar I \cup (C \setminus \Gamma(\bar I))$.
\end{proposition}
\begin{proof}
We show this by induction on $I$. The base case $|I| = 0$ is trivial since then $D = C$. For the induction step, consider some atom $f = [i,y]$, and let $I' = I \uplus \{ f \}$. By induction hypothesis, we have $A_{I'} = A_f A_I e_C \propto A_f e_D$ for $D = \bar I \cup (C \setminus \Gamma(\bar I))$. So it suffices to show that $A_f e_D \propto e_{D'}$ for $D' = \bar I' \cup (C \setminus \Gamma(\bar I')) = D \cup \{f \} \setminus \Gamma(f)$.

Consider a state $X$. If $X \notin f$, then $e_{X}^{\top} A_f e_D = 0 = e_{X}^{\top} e_{D'}$. Similarly, if $X\notin f'$ for $f' \in D \setminus \Gamma(f)$, then also $e_{X}^{\top} A_f e_D = 0 = e_{X}^{\top} e_{D'}$ since resampling $f$ cannot move the state to $f'$.  Thus, we suppose that $X \in \langle D' \rangle$.  Since $C$ is a stable set, it has at most one neighbor of $f$. If $C \cap \Gamma(f)  = \emptyset$, then $e_X^{\top} A_f e_C = 1$ for any such state $X$; otherwise, if $C \cap \Gamma(f) = \{ [i, y'] \}$, then $e_X^{\top} A_f e_C = \Pr_{\Omega}(X_i = y')$ for any such state $X$. In either case, it is a scalar which is constant for all $X \in \langle D' \rangle$.
\end{proof}

\begin{proof}[Proof of \Cref{perm-thm} for the variable setting]
Enumerate $I = \{ f_1, \dots, f_k \}$ to satisfy \Cref{character-prop}, where $f_i = \langle F_i \rangle$. For each $i$, let $J_i = F_1 \uplus \dots \uplus F_i$ and let $\bar J_i = F_1 \cup \dots \cup F_k$. We claim that, for each $i = 1, \dots, k$, we have
\begin{equation}
\label{zaz1eq}
\bar J_i \cap \Gamma(C) \neq \bar J_{i-1} \cap \Gamma(C)
\end{equation}

For, suppose that $\bar J_i \cap \Gamma(C) = \bar J_{i-1} \cap \Gamma(C)$, and define $I' = \{f_1, \dots, f_{i-1} \}$. By \Cref{obl-p2} we have $A_{I'} \propto A_{J_{i-1}}$ and by \Cref{add-active-prop}, we have $A_{J_{i-1}} e_E \propto e_{D'}$ where $D' =\bar J_{i-1} \cup (C \setminus \Gamma(\bar J_{i-1})$. Thus, we can write $A_{I'} e_E = p e_{D'}$ for some scalar  $p \geq 0$.  Now consider any state $X$; we claim that 
\begin{equation}
\label{zbz1eq}
e_{X}^{\top} A_{f_i} A_{I'} \leq p \cdot e_{X}^{\top} e_{D'}
\end{equation}
 Let $D = \bar J_i \cup (C \setminus \Gamma(\bar J_i))$; since $\bar J_i \cap \Gamma(C) = \bar J_{i-1} \cap \Gamma(C)$ we have $D \supseteq D'$. So the LHS of (\ref{zbz1eq}) is zero unless $X \in \langle D \rangle \subseteq \langle D' \rangle$. On the other hand, for $X \in \langle D \rangle$, the substochasticity of matrix $A_{f_i}$ gives $  e_{X}^{\top} A_{f_i} A_{I'} e_E = e_{X}^{\top} A_{f_i} \cdot p e_{D'}  \leq p$  as desired.
  
  So, if $\bar J_i \cap \Gamma(C) = \bar J_{i-1} \cap \Gamma(C)$, we would have $  A_{f_i} A_{I'} \preceq A_{I'}$, contradicting \Cref{character-prop}.  Thus, contrariwise, we have established Eq.~(\ref{zaz1eq}) for $i =1, \dots, k$. We define the function $\phi$ by setting $\phi( f_i ) = g_i$ where $g_i$ is an arbitrary element in $(\bar J_i \cap \Gamma(C)) \setminus (\bar J_{i-1} \cap \Gamma(C))$.
\end{proof}

\bigskip
We now turn to the much harder  permutation setting. To recall and set notation, $\Omega$ is the uniform distribution on permutations $\pi$ on $\{1, \dots, n \}$. For each pair $(x,y)$, there is atomic event $\pi x = y$; for brevity in this section, we denote this event by $[x, y]$. Its resampling oracle sets $\pi \leftarrow (y \ z) \pi$, for $z$ drawn uniformly from $\{1, \dots, n\}$. We have $[x,y] \sim [x', y']$ if exactly one of the following holds: (i) $x = x'$ or (ii) $y = y'$. 

\begin{proposition}
\label{add-1-prop}
Let $f = [x,y]$ be an atom of $\mathcal A$ and let $C$ be a stable set of $\mathcal A$. Then $A_f e_C \propto e_{C'}$ for the $\mathcal A$-stable set $C'$ obtained from $C$ as follows:
\begin{itemize}
\item If $C$ contains two neighbors $f_1 = [x,y_1], f_2 = [x_2,y]$ of $f$, then $C' = C \cup \{ [x_2, y_1], f \} \setminus \Gamma(f)$.
\item Otherwise,  $C' = C \cup \{ f \} \setminus \Gamma(f)$.
\end{itemize}
\end{proposition}
\begin{proof}
We want to show that there is a scalar $p$ with $e_{\pi}^{\top} A_f e_C = p \cdot e_{\pi} e_{C'}$ for all states $\pi$.  If $\pi \notin f$, then $e_{\pi}^{\top} A_f e_C = 0 = e_{\pi}^{\top} e_{C'}$. Similarly, if $\pi \notin f'$ for $f' \in C \setminus \Gamma(f)$, then also $e_{\pi}^{\top} A_f e_C = 0 = e_{\pi}^{\top} e_{C'}$ since resampling $f$ cannot move the state to $f'$.  Thus, we suppose that $\pi \in \langle C \cup \{f \} \setminus \Gamma(f) \rangle$. 

Now suppose we resample $f$ to $\pi' = (y \ z) \pi$. We consider the cases in turn:
\begin{itemize}
\item If $C$ contains two neighbors $f_1 = [x,y_1], f_2 = [x_2,y]$, then we claim that $\pi' \in \langle C \rangle$ precisely when $z = \pi x_2 = y_1$. For, in order to get $\pi' \in f_1$,  we must have $\pi' x = y_1$. Since $\pi x = y$, this implies $(y \ z) y = y_1$, i.e. $z = y_1$. Thus, $\pi' = (y \ y_1) \pi$. To get $\pi' \in f_2$, we need $y = \pi' x_2 = (y \ y_1) \pi x_2$, i.e. $\pi x_2 = y_1$.   In particular, we must have $\pi \in [x_2, y_1]$, and for any such $\pi$ we have $e_{\pi}^{\top} A_f e_C = 1/n$ and $e_{\pi}^{\top} e_{C'} = 1$.

\item If  $C$ has a single neighbor $f_1 = [x, y_1]$ of $f$, then $\pi' \in f_1$ precisely if $y_1 = z$. Similarly, if $C$ has a single neighbor $f_2 = [x_2, y]$ of $f$, then $\pi' \in f_2$ precisely if $z = \pi x_2$.  In either case, we have $e_{\pi}^{\top} A_f e_C = 1/n$ for all such $\pi$ and also $e_{\pi}^{\top} e_{C'} = 1$.

\item If $f \in C$, then $\pi'$ is in $\langle C \rangle$ iff $z =  y$ and $\pi' = \pi \in \langle C \rangle$. Then $e_{\pi}^{\top} A_f e_C = \frac{1}{n} e_C$. Note that, because $C$ is a stable set, this case implies that $C$ has no neighbors of $f$.

 \item If $C \cap \overline \Gamma(f) = \emptyset$, then $\pi'$ is in $\langle C \rangle$ iff $z \notin \{ y_1, \dots, y_k \}$ where $C = \{ [x_1, y_1], \dots, [x_k, y_k] \}$. Thus $e_{\pi}^{\top} A_f e_C = \frac{n - k}{n}$ and $e_{\pi}^{\top} e_{C'} = 1$. \qedhere
 \end{itemize}
\end{proof}

\begin{proposition}
\label{add-active-prop}
Given stable multisets $C, I$ of $\mathcal A$, let $\bar I$ denote the corresponding set of $I$ (i.e. with at most one copy of each element).  Define a bipartite graph $G^C_{I}$ with left-vertex-set $C$ and right-vertex-set  $\bar I$, with an edge on $f, f'$ iff $f \sim f'$, and let $\tau^{C}(\bar I)$ be the size of a maximum matching in $G^C_{I}$.   

For such multisets $I, C$, there is an associated stable set $D^C_{I}$ of $\mathcal A$ with  $A_I e_C \propto e_{D^C_{I}}$. Furthermore, if $\tau^C(\bar I \cup \{f \}) = \tau^C(\bar I)$, then $D^C_{I \uplus \{f \}} = D^C_{I} \cup \{ f \}$.
\end{proposition}
\begin{proof}
Throughout this proof, we fix $C$ and write $D_I, \tau(\bar I), G_I$ instead of $D^C_{I}, \tau^C(\bar I), G^C_{I}$ etc.

Since $C$ and $\bar I$ are stable sets, the graph $G_{I}$ has degree at most two ---  each node $[x,y]$ has at most one neighbor of the form $[x', y]$ and at most one neighbor of the form $[x, y']$. So $G_{I}$ decomposes into paths and cycles and any maximal path of $G_I$ which starts and ends at left-nodes (which we call a \emph{$C$-path}), can be written uniquely as
$$
[x_1, y_1], [x_1, y_2], [x_2, y_2], \dots, [x_k, y_{k-1}], [x_k, y_k];
$$
We define $H_I$ to be the set of atoms $[x_k, y_1]$ for each such $C$-path; this includes the case $k = 1$ where $[x_1, y_1]$ is an isolated $C$-node. We define $D_I = \bar I \cup H_I$; note that $D_I$ is an $\mathcal A$-stable set.

We first show that $A_I e_C \propto e_{D_{I}}$ by  induction on $|I|$. The base case $I =\emptyset$ holds since then $D_{I} = C$. For the induction step, let $I' = I \uplus \{f \}$. By induction hypothesis, we have $A_{I'} e_C = A_f A_{I} e_C \propto A_f e_{U}$ for $U = D_{I}$. This in turn is proportional to $e_{ U' }$ for the stable set $U'$ obtained from $U, f$ according to the rules given in \Cref{add-1-prop}. There are three cases.

 If $U$ has no neighbors of $f$, or if $f \in D_I$, then $U' = U \cup \{f \}$,  and $H_{I'} = H_{I} \cup \{f \}$. So indeed $U' =  D_{I'} = D_I \cup \{ f \}$. Thus, we assume for the remainder that $f \notin I$.

 If $U$ has one neighbor $g = [x_1, y_1]$ of $f$, then since $\bar I$ is a stable set, it must have $g \notin \bar I$, i.e. $G_{I}$ contains a $C$-path $P$ with endpoints $x_1, y_1$. In $G_{I'}$, this $C$-path now terminates in a degree-one left-node $[x, y]$, and $P$ is no longer a $C$-path of $G_{I'}$. So $D_{I} \cup \{f \} \setminus \{ g \} = U' = D_{I'}$.

 If $U$ has two neighbors $g_1 = [x, y_1], g_2 = [x_2, y]$, then again since $\bar I$ is a stable set these must correspond to $C$-paths in $G_{I}$. Thus, $G_{I}$ has two $C$-paths with endpoints $x, y_1$ and $x_2, y$ respectively. Now $G_{I'}$ has a new left-node $[x,y]$. This merges the two $C$-paths into a single new $C$-path with endpoints $x_2, y_1$. Thus again $D_{I'} = D_{I} \cup \{ f, [x_2, y_1] \} \setminus \{g_1, g_2 \}  = U'$.
 
 This completes the induction. Next, suppose $\tau^C(\bar I \cup \{f \}) = \tau^C(\bar I)$. If $f \in C$ or $f \in I$, then  $f \in D_{I}$ and we have already seen that $D_{I'} = D_{I} = D_I \cup \{f \}$. So suppose $f$ is not in $C$ or $I$,  and let $\tilde G'$ denote the connected component  of $G_{I'}$ containing  $f$, and let $\tilde G$ be the  graph obtained by deleting $f$ from $\tilde G'$.  Since $\tilde G'$ is a path or cycle, and $\tilde G$ is obtained by deleting a right-node, the only way that $\tilde G$ and $\tilde G'$ can have the same matching size is if  $\tilde G'$ is a path starting and ending at right-nodes. In this case, neither $\tilde G'$ nor $\tilde G'$ have any $C$-paths. Hence we get $H_{I \uplus \{f \}}  = H_{I}$. 
\end{proof}

\begin{proof}[Proof of \Cref{perm-thm} for the permutation setting]
Enumerate $I = \{ f_1, \dots, f_k \}$ to satisfy \Cref{character-prop}, where $f_i = \langle F_i \rangle$. For each $i$, let $J_i = F_1 \uplus \dots \uplus F_i$ and let $\bar J_i = F_1 \cup \dots \cup F_k$. We claim that, for each $i = 1, \dots, k$, we have
\begin{equation}
\label{zaz2eq}
\tau^C( \bar J_i) > \tau^C( \bar J_{i-1})
\end{equation}

For, suppose that $\tau^C(\bar J_i) = \tau^C(\bar J_{i-1})$, and define $I' = \{f_1, \dots, f_{i-1} \}$. We have $A_{I'} \propto A_{J_{i-1}}$ and by \Cref{add-active-prop}, we have $A_{J_{i-1}} e_E \propto e_{D'}$ for some stable set $D'$ of $\mathcal A$. Thus, we can write $A_{I'} e_E = p e_{D'}$ for some scalar  $p \geq 0$.  Now consider any state $\pi$; we claim that 
\begin{equation}
\label{zbz2eq}
e_{\pi}^{\top} A_{f_i} A_{I'} \leq p \cdot e_{\pi}^{\top} e_{D'}
\end{equation}

Again, by \Cref{add-active-prop}, we have $A_{f_i} A_{I'} \propto e_D$ for a stable set $D$; since $\tau^C(F_i \cup \bar J_{i-1}) = \tau^C(\bar J_{i-1})$ we have $D = D' \cup F_i$.   So the LHS of (\ref{zbz2eq}) is zero unless $\pi \in \langle D \rangle \subseteq \langle D' \rangle$. On the other hand, for $\pi \in \langle D \rangle$, the substochasticity of matrix $A_{f_i}$ gives $  e_{\pi}^{\top} A_{f_i} A_{I'} e_E = e_{\pi}^{\top} A_{f_i} \cdot p e_{D'}  \leq p$  as desired.
  
  So, if $\tau^C(\bar J_i) = \tau^C(\bar J_{i-1})$, we would have $  A_{f_i} A_{I'} \preceq A_{I'}$, contradicting \Cref{character-prop}.  Thus, contrariwise, we have established  Eq.~(\ref{zaz2eq}) for $i =1, \dots, k$. So,  for each $i$ there is $g_i \in F_i$ and $F_i' \subseteq F_i \setminus \{g_i \}$ with $\tau^C(\bar J_{i-1} \cup F_i' \cup \{g_i \}) > \tau^C(\bar J_{i-1} \cup F'_i)$. 

It is known (see, e.g. \cite[Example 1.4]{lovasz1983}) that $\tau^C$ is a submodular set function for fixed $C$. Hence, 
$$
1 = \tau^C(\bar J_{i-1} \cup F_i' \cup \{g_i \}) - \tau^C(\bar J_{i-1} \cup F'_i) \leq \tau^C(\{g_1, \dots, g_{i-1} \} \cup \{g_i \}) - \tau^C( \{g_1, \dots, g_{i-1} \})
$$
since $\{g_1, \dots, g_{i-1} \} \subseteq \bar J_{i-1}$.  So $\tau^C( \{g_1, \dots, g_k \}) = k$ and $G^C_{ \{ g_1, \dots, g_k \} }$ has a matching $M$ of size $k$.  We define the function $\phi$ by setting $\phi( f_i ) = c_i$ where $g_i$ is matched to $c_i$ in $M$.
\end{proof}

 The clique-perfect-matching setting is very similar to the permutation setting. Here, $\Omega$ is the uniform distribution on perfect matchings $M$ of the $n$-clique. For each pair $(x,y)$ with $x \neq y$, there is an atom that $M \supseteq \{x, y \}$; we denote this by $[x,y]$. Note that $[x,y] = [y,x]$. For $x < y$, the resampling oracle is to draw $z$ uniformly from $\{1, \dots, n \} \setminus \{x \}$ and set $M \leftarrow (y \ z) M$. We define $[x,y] \sim [x', y']$ iff $| \{x, y \} \cap \{x' , y' \} | = 1$.

\begin{proposition}
\label{add-1-prop2}
Let $f = [x,y]$ be an atom of $\mathcal A$ where $x < y$ and let $C$ be a stable set of $\mathcal A$. Then $A_f e_{C} \propto e_{C'}$ for the stable set $C'$ obtained from $C$ as follows:
\begin{itemize}
\item If $C$ contains exactly two neighbors $f_1 = [x,y_1], f_2 = [x_2,y]$ of $f$, then $C' = C \cup \{ f, [x_2, y_1] \} \setminus \Gamma(f)$.
\item Otherwise, $C' = C \cup \{f \} \setminus  \Gamma(f)$.
\end{itemize}
\end{proposition}
\iffalse
\begin{proof}
Consider state $M \in f$.  If $M \notin f$, or if $M \notin f'$ for $f' \in C \setminus \Gamma(f)$, then $e_{M}^{\top} A_f e_C = 0 = e_{M}^{\top} e_{C'}$. Thus, we suppose that $M \in \langle C \cup \{f \} \setminus \Gamma(f) \rangle$. 

Suppose we resample $f$ to $M' = (y \ z) M$, for $z$ drawn from $[n] \setminus \{ x \}$. We consider the following cases:
\begin{itemize}
\item Suppose that $C$ contains two atoms $f_1, f_2$. Then we claim that $M' \in \langle C \rangle$ precisely when $z = y_1$ and $\{ x_2, y \} \in M$. First, we have $\{x, y_1 \} \in M'$ with $M' = (y \ z) M$ iff $z = y_1$. Thus, $M' = (y \ y_1) M$. To satisfy $f_2$, we must have $\{ x_2, y_1 \} \in M$.  In this case, $e_{M}^{\top} A_f e_{C} = \frac{1}{n-1}$ and also $e_{M}^{\top} e_{C'} = 1$.

\item Suppose that $C$ contains a single neighbor $f_1 =[x, y_1]$. Then $M' \in f_1$ precisely if $z = y_1$. So $e_{M}^{\top} A_f e_C = \frac{1}{n-1}$ and $e_{M}^{\top} e_{C'} = 1$. Similarly, suppose $C$ contains a single neighbor $f_2 = [x_2, y]$. Then $M' \in f_2$ precisely if $z = y'$ where $\{x_2, y' \} \in M$; again here $e_{M}^{\top} A_f e_C = \frac{1}{n-1}$ and $e_{M}^{\top} e_{C'} = 1$.

\item Suppose $C$ has no neighbors of $f$. Let $C = \{ [x_1, x_2], \dots, [x_{2k-1}, x_{2k}] \}$;  we have $M \in \langle C' \rangle$ precisely if $z \notin \{ x_1, \dots, x_{2k} \}$. Since $z \neq x$, we thus have $e_{M}^{\top} A_f e_C = \frac{n - 2 k + 1}{n-1}$. We also have $e_{M}^{\top} e_{C'} = 1$. \qedhere
\end{itemize}
\end{proof}
\fi

\begin{proposition}
\label{add-active-prop22}
Given stable multisets $C, I$ of $\mathcal A$, let $\bar I$ denote the corresponding set of $I$. Define a bipartite graph $G^C_{I}$ with left-vertex-set $C$ and right-vertex-set  $\bar I$, with an edge on $f, f'$ iff $f \sim f'$, and let $\tau^{C}(\bar I)$ be the size of a maximum matching in $G^C_{I}$.   

For such multisets $I, C$, there is an associated stable set $D^C_{I}$ of $\mathcal A$ with $A_I e_C \propto e_{D^C_{I}}$. Furthermore, if $\tau^C(\bar I \cup \{f \}) = \tau^C(\bar I)$, then $D^C_{I \uplus \{f \}} = D^C_{I} \cup \{ f \}$.
\end{proposition}

We omit the proofs of \Cref{add-1-prop2} and \Cref{add-active-prop22}, as well as the remainder of the proof of \Cref{perm-thm}, as they are precisely analogous to the permutation setting.

\bibliographystyle{plain}
\bibliography{kolmo}

\end{document}